%% file: main.tex
\documentclass[sigconf, nonacm]{acmart}

\input{preamble}

\begin{document}
\title{\system{}: Certified Reuse of Cached Natural-Language Views for Semantic Operators}

\input{authors}

\input{abstract}

\maketitle

\vldbtopmatter

\input{introduction}
\input{preliminaries}
\input{methodology}

\input{experiment}
\input{relatedwork}
\input{conclusion}

\clearpage
\newpage
\bibliographystyle{ACM-Reference-Format}
\bibliography{refs}

\end{document}

%% file: preamble.tex
\usepackage{pvldb}

\renewcommand\vldbdoi{XX.XX/XXX.XX}
\renewcommand\vldbpages{XXX-XXX}
\renewcommand\vldbavailabilityurl{https://github.com/LeoTTTPhat/semreuse}

\usepackage{booktabs}
\usepackage{graphicx}
\usepackage[ruled,vlined,linesnumbered]{algorithm2e}
\usepackage{tikz}
\usepackage{pgfplots}
\usepackage{caption}
\usepackage{subcaption}
\usetikzlibrary{arrows.meta, positioning, shapes.geometric}
\usepackage{pgfplots}
\pgfplotsset{compat=1.18}
\usepgfplotslibrary{groupplots}
\input{figures/pgfstyle}

\newtheorem{proposition}{Proposition}

\newtheorem{definition}{Definition}

\newcommand{\system}{\texttt{SemReuse}\xspace}
\usepackage{xspace}
\newcommand{\Ora}{\mathcal{O}}
\newcommand{\ext}{\mathrm{ext}}
\newcommand{\rel}[1]{\textsc{#1}}

%% file: figures/pgfstyle.tex
\definecolor{srBlue}{HTML}{2A78D6}
\definecolor{srOrange}{HTML}{EB6834}
\definecolor{srAqua}{HTML}{1BAF7A}
\definecolor{srTa}{HTML}{FFD966}
\definecolor{srTb}{HTML}{CE7E00}
\definecolor{srTc}{HTML}{256ABF}
\definecolor{srTd}{HTML}{104281}
\definecolor{srInk}{HTML}{0B0B0B}
\definecolor{srInkTwo}{HTML}{52514E}
\definecolor{srMuted}{HTML}{898781}
\definecolor{srGrid}{HTML}{E1E0D9}
\definecolor{srAxis}{HTML}{C3C2B7}

\pgfplotsset{
  sr base/.style={
    font=\footnotesize,
    scale only axis,
    axis x line*=bottom, axis y line*=left,
    axis line style={draw=srAxis, line width=0.4pt},
    every tick/.style={draw=srAxis, line width=0.4pt},
    major tick length=2pt, minor tick length=1.2pt,
    tick label style={font=\scriptsize, text=srInk},
    label style={font=\footnotesize, text=srInk},
    title style={font=\footnotesize, text=srInk, align=left},
    grid style={draw=srGrid, line width=0.3pt},
    ymajorgrids,
    legend style={draw=none, fill=none, font=\scriptsize, text=srInk,
                  /tikz/every even column/.append style={column sep=6pt}},
    legend cell align=left,
    every axis plot/.append style={line width=0.9pt, line join=round,
                                   line cap=round},
  },
  sr mark/.style={mark options={solid, draw=white, line width=0.5pt,
                                scale=1.0}},
  sr open mark/.style={mark options={solid, fill=white, line width=0.7pt}},
}

%% file: authors.tex
\author{Phat T. Tran-Truong}
\affiliation{%
  \institution{Vietnam National University}
  \city{Ho Chi Minh City}
  \country{Vietnam}
}
\email{phatttt@hcmut.edu.vn}

\author{Tung Kieu}
\affiliation{%
  \institution{Aalborg University}
  \city{Aalborg}
  \country{Denmark}
}
\email{tungkvt@cs.aau.dk}

\author{Xuan-Bach Le}
\affiliation{%
  \institution{Vietnam National University}
  \city{Ho Chi Minh City}
  \country{Vietnam}
}
\email{lexuanbach@hcmut.edu.vn}

\author{Son X. Ha}
\affiliation{%
  \institution{RMIT University}
  \city{Ho Chi Minh City}
  \country{Vietnam}
}
\email{ha.son@rmit.edu.vn}

\author{Duc Kieu}
\affiliation{%
  \institution{Deakin University}
  \city{Geelong}
  \country{Australia}
}
\email{v.kieu@deakin.edu.au}

\author{Bin Yang}
\affiliation{%
  \institution{East China Normal University}
  \city{Shanghai}
  \country{China}
}
\email{byang@dase.ecnu.edu.cn}

\author{Christian S. Jensen}
\affiliation{%
  \institution{Aalborg University}
  \city{Aalborg}
  \country{Denmark}
}
\email{csj@cs.aau.dk}

%% file: abstract.tex
\begin{abstract} 
Semantic operators allow users to filter unstructured data using conditions expressed in natural language, but existing systems typically invoke an LLM for every input tuple. Analytical workloads often contain repeated or related filters whose results could be reused. Such reuse is risky because it requires inferring whether filter conditions are equivalent, whether one implies another, or whether they are mutually exclusive. An incorrect judgment can omit qualifying tuples or include nonqualifying ones. We introduce \system{}, a system for safely reusing previous semantic-filter results. \system{} stores each result as a materialized view and compares new filters with existing views to identify tuples that can provisionally be accepted or skipped without another LLM call. Because these decisions rely on a fallible language reasoner, \system{} audits every rewritten result. Its pooled stratified audit uses exact finite-population bounds to provide jointly valid per-query recall and precision guarantees. If the requested guarantees cannot be established, \system{} evaluates additional tuples and progressively approaches full evaluation. Thus, incorrect relation judgments affect efficiency rather than the validity of the reported guarantees. Experiments on large labeled corpora and real LLM executions show reductions of up to $15.4\times$. They also identify cross-filter inconsistency as the main source of repair cost. 
\end{abstract}

%% file: introduction.tex
\section{Introduction}
\label{sec:introduction}

Semantic operators bring a relational interface to analytics over unstructured data~\cite{patel2025lotus,liu2025palimpzest,shankar2025docetl,dai2024uqe,liskowski2025aisql}. A semantic filter asks a large language model (LLM) whether each input tuple satisfies a condition expressed in natural language. For example,
\begin{center}
\texttt{sem\_filter(``the document is about baseball'')}
\end{center}
evaluates the condition independently for every tuple. Filtering a relation containing one million documents can therefore require one million LLM calls. Existing systems reduce this cost within an individual query through model cascades and proxies~\cite{kang2020supg,lu2018probabilistic,zeighami2026bargain}, plan optimization~\cite{russo2025abacus,zhao2026larch}, specialized indexes~\cite{lin2024zendb,sun2025quest}, and approximate processing~\cite{jo2024thalamusdb}. These techniques largely optimize each query independently, although analytical workloads often contain related filters. An analyst may first search for documents about ``sports'' and later narrow the analysis to documents about ``baseball.'' In this case, tuples previously rejected by the broader filter need not be evaluated again. Similarly, a repeated or rephrased condition may reuse nearly the entire earlier result. Relational database systems have long exploited such opportunities by storing previous query results as materialized views and rewriting later queries to use them, a problem traditionally studied as \emph{answering queries using views}~\cite{levy1995answering,halevy2001aquv,chaudhuri1995optimizing,goldstein2001optimizing}.

\noindent\textbf{The challenge.} Reusing a previous semantic-filter result requires determining how its filter condition relates to the new one. For SQL conditions, an optimizer can often establish such relationships exactly. Conditions expressed in natural language do not provide an exact algebra. A language model may judge that one condition implies another, that the two conditions are equivalent, or that they are mutually exclusive, but any of these judgments can be wrong. An incorrect implication may exclude qualifying tuples, while an incorrect equivalence may introduce nonqualifying tuples.

Existing semantic caches do not resolve this problem. They typically reuse an entire cached response when two requests have similar embeddings. However, similarity is symmetric, whereas implication is directional: about ``baseball'' may imply about ``sports,'' but not conversely. Moreover, semantic caches either reuse the complete result or do not reuse it at all, whereas semantic filters can often benefit from reusing only part of a previous result. Most importantly, these caches do not bound the errors introduced by an incorrect reuse decision. This raises the central question of this work: \emph{How can a semantic-operator system reuse results across related filter queries while providing reliable guarantees even when the relationships between their filter conditions are judged incorrectly?}

\noindent\textbf{The \system{} approach.} We present \system, a system for certified reuse of previously executed semantic filters. \system records each filter result as a materialized view containing the filter condition, a bitmap of the reported tuples, verification information, and lineage. When a new filter arrives, a lightweight two-tier reasoner compares its condition with those of the cached views. The reasoner proposes whether two conditions are equivalent, whether one implies the other, whether they are mutually exclusive, or whether no useful relationship can be established.

The rewriter translates these proposed relationships into tuple-level reuse. A broader cached view can restrict the tuples that require evaluation; a narrower view can provide tuples that are provisionally accepted; and a mutually exclusive view can identify tuples that are provisionally rejected. Multiple views can be combined through bitmap operations to construct a reduced evaluation plan. These decisions can substantially decrease the number of tuples sent to the LLM, but they are treated as hypotheses rather than trusted facts.

\system\ therefore audits every rewritten result. The plan partitions the relation into candidate tuples that are evaluated by the LLM, provisionally accepted tuples, and provisionally rejected tuples. The audit checks samples from the provisionally accepted tuples and from a pooled population of the rejected tuples. Exact finite-population bounds then provide jointly valid lower bounds on recall and precision for the current query. Errors discovered during the audit are corrected. If the recall bound does not meet the requested target, \system\ evaluates additional groups of tuples until the target is reached. In the worst case, it approaches full evaluation. Consequently, an incorrect relationship proposed by the reasoner can increase execution cost, but it does not invalidate the reported guarantees.

This separation between planning and verification changes how the reasoner should be used. A conventional optimizer acts only on highly confident predictions because an incorrect decision directly damages the result. In \system, the audit bounds and repairs such errors. The reasoner can therefore propose reuse more aggressively, while the audit determines how much of that reuse can safely be retained. Our experiments confirm that lower confidence thresholds can produce greater savings without changing the validity of the certificates---a principle we summarize as \emph{trust aggressively, audit everything}.

We evaluate \system using large single-label and multi-label corpora, real LLM evaluators from four model families over an analyst-written sequence of filters, a tuned per-query proxy cascade, and the released LOTUS engine; we also ablate each audit design decision and test the certificates where they are tight. On the large labeled workloads, \system reduces LLM calls by $2.4\times$--$15.4\times$ while providing per-query certificates. The real-LLM experiments reveal an important limitation, on every model: a relationship that is reasonable at the level of language may not hold consistently across the LLM's tuple-level decisions. In this case, the audit remains valid, but repairing the resulting plan can consume much of the reuse benefit. Thus, the effectiveness of cross-query reuse depends not only on corpus size or reasoner quality, but also on the consistency of the LLM evaluator across related filter conditions.

In summary, we make four contributions: (1) We formulate cross-query reuse for semantic filters and define reuse rules based on equivalence, implication, and mutual exclusion between filter conditions. (2) We present \system, which combines a materialized-view store, a two-tier relation reasoner, and a bitmap-based rewriter to reduce tuple-level LLM evaluations. (3) We develop a pooled stratified audit that provides jointly valid per-query recall and precision certificates, regardless of whether the inferred relationships are correct. (4) We evaluate \system across diverse corpora, workloads, and LLM settings, demonstrating substantial savings and identifying cross-filter consistency as the main practical constraint.

%% file: preliminaries.tex
\section{Background and Problem Formulation}

\begin{definition}[Semantic Filter]
    Let $D=\{x_1,\ldots,x_N\}$ be a relation of unstructured tuples, and let $\mathcal{P}$ denote the space of filter conditions expressed in natural language. An \emph{oracle} is a function
    \begin{equation}
    \Ora:\mathcal{P}\times D\rightarrow{0,1},
    \end{equation}
    where $\Ora(p,x)=1$ indicates that tuple $x$ satisfies condition $p$. The result of fully evaluating $p$ over $D$ is its \emph{extension}
    \begin{equation}
    \ext(p)={x\in D:\Ora(p,x)=1}.
    \end{equation}
    In deployment, $\Ora$ is typically an LLM, and each evaluation $\Ora(p,x)$ incurs one oracle call.
\end{definition}

We assume that $\Ora(p,x)$ is fixed for every condition--tuple pair, for example by fixing the model, prompt, and decoding procedure or by memoizing its output. This assumption makes $\ext(p)$ defined well. The oracle defines the reference semantics of the semantic filter and is not assumed to coincide with human ground truth.

\begin{definition}[Cached View]
    Executing a filter condition $q\in\mathcal{P}$ produces a cached view
    \begin{equation}
    V_q=(q,R_q,\mathit{ver}_q,\mathit{lineage}_q),
    \label{eq:cached-view}
    \end{equation}
    where $R_q\subseteq D$ is the reported result, stored as a bitmap; $\mathit{ver}_q\subseteq D$ is the set of tuples evaluated directly by the oracle; and $\mathit{lineage}_q$ records the data and oracle versions, source views, and published certificates. A view is \emph{exact} if $R_q=\ext(q)$. Otherwise, it may contain false inclusions or exclusions introduced by earlier reuse. Let $\mathcal{V}$ denote the set of cached views available before the current query.
\end{definition}

\begin{definition}[Extension Relationship]
    For two filter conditions $p,q\in\mathcal{P}$, let $P=\ext(p)$ and $Q=\ext(q)$. Their relationship is defined by
\begin{equation}
\mathsf{Rel}(p,q)=
\begin{cases}
\rel{equiv},    & P=Q,\\
\rel{forward},  & \emptyset\neq P\subsetneq Q,\\
\rel{backward}, & \emptyset\neq Q\subsetneq P,\\
\rel{disjoint}, & P\cap Q=\emptyset \text{ and } P\neq Q,\\
\rel{overlap},  & \text{otherwise}.
\end{cases}
\label{eq:relations}
\end{equation}
    The relation names are defined with respect to the ordered pair $(p,q)$. Under \rel{forward}, $q$ is broader than $p$; under \rel{backward}, $q$ is narrower than $p$.
    
    Computing $\mathsf{Rel}(p,q)$ exactly requires the complete extensions of both conditions. Instead, \system\ employs a relation reasoner
    \begin{equation}
    E:\mathcal{P}\times\mathcal{P}
    \rightarrow \mathcal{R}\times[0,1],
    \qquad
    E(p,q)=(\widehat{\mathit{rel}},c),
    \end{equation}
    where $\mathcal{R}$ contains the five relationships in~\eqref{eq:relations}, $\widehat{\mathit{rel}}$ is the predicted relationship, and $c$ is its confidence. Because $E$ observes only the filter conditions, $\widehat{\mathit{rel}}$ may differ from the true extension relationship. The reasoner's output is therefore used to construct an execution plan, not as a correctness guarantee.
\end{definition}

\textbf{Certified reuse problem.}
Let $\mathcal{H}$ denote the execution history available before the current audit, including the cached views and preceding query results. Given a new filter condition $p$, a view store $\mathcal{V}$, a relation reasoner $E$, a recall target $t\in(0,1]$, and a confidence level $1-\alpha$, the objective is to produce a reported result $R_p\subseteq D$ and lower bounds $b_R,b_P\in[0,1]$ such that
\begin{equation}
\Pr_{\mathcal{A}}\!\left[
\mathrm{recall}_p(R_p)\ge b_R
;\land\;
\mathrm{precision}_p(R_p)\ge b_P
,\middle|\,\mathcal{H}
\right]
\ge 1-\alpha,
\;
b_R\ge t,
\label{eq:certificate}
\end{equation}
where $\mathcal{A}$ denotes the fresh randomness used to audit the current query, and
\begin{equation}
\mathrm{recall}_p(R_p)
=\frac{|R_p\cap\ext(p)|}{|\ext(p)|},
\;\;
\mathrm{precision}_p(R_p)
=\frac{|R_p\cap\ext(p)|}{|R_p|}.
\end{equation}
A ratio is defined as one when its denominator is zero.

Let $C_{\Ora}(p)$ denote the number of oracle calls issued while answering $p$. The optimization problem is
\begin{equation}
\min\mathbb{E}_{\mathcal{A}}
\!\left[C_{\Ora}(p)\mid\mathcal{H}\right]
\;\;
\text{subject to~\eqref{eq:certificate}.}
\end{equation}
The guarantee is per query and is defined relative to the oracle result $\ext(p)$. It makes no assumption about the accuracy of $E$ or the correctness of the cached views.

%% file: methodology.tex
\section{\system: System Design}
\label{sec:system}

\subsection{Framework Overview}

\input{figures/fig_framework}

Figure~\ref{fig:framework} summarizes the execution of a semantic filter. For a new predicate $p$, \system\ first retrieves eligible cached views, infers the relationship between $p$ and each cached predicate, and constructs a reuse plan. It may then refine the plan using an optional per-query proxy. Finally, it evaluates the remaining candidate tuples and audits all reused decisions. View lookup, relation inference, and plan construction do not invoke the tuple-level oracle $\Ora(p,x)$; candidate evaluation, proxy pilots, and auditing account for the oracle-call budget analyzed in this paper.

\subsection{Predicate Store}

The predicate store maintains one \texttt{CachedView} for each executed predicate, using the representation defined in Equation~\ref{eq:cached-view}. A hash index over normalized predicate text identifies exact repetitions without invoking the relation reasoner. Such a match establishes predicate equivalence, but its stored result is still audited unless the view is exact. For each view $V_{q}$, let $D_{q}\subseteq D$ denote the tuples covered when the view was created, as determined from its lineage. This coverage information ensures that tuples appended after the view was created are not treated as cached negatives.

The result $R_{q}$ and verification set $\mathit{ver}_q$ are stored as bitmaps. Thus, excluding metadata, a view requires approximately two bits per covered tuple; for example, 100 views over 100,000 tuples require approximately 2.5MB. Each view also records its oracle model, prompt-template version, source views, and published certificates. Only views with compatible oracle and prompt versions are eligible for reuse. Our prototype uses an in-memory store, although the system interface does not depend on a particular persistence layer.

\subsection{Calibrated Two-Tier Relation Reasoner}

For a new predicate $p$ and a cached predicate $q$, the relation reasoner returns $E(p,q)=(\widehat{\mathit{rel}},c)$, where $\widehat{\mathit{rel}}$ is one of the five relationships defined in Equation~\ref{eq:relations}, and $c\in[0,1]$ is the confidence used during plan construction. The reasoner operates once per eligible cached view, so its cost depends on the number of views rather than the number of tuples.

\subsubsection{Tier one: bidirectional NLI scoring}
The first tier uses a DeBERTa-v3-xsmall cross-encoder fine-tuned on SNLI and MNLI datasets~\cite{he2023debertav3,bowman2015snli,williams2018multinli}. The model has 71 million backbone parameters and can be executed on a CPU.

A natural-language-inference model examines an ordered pair consisting of a \emph{premise} and a \emph{hypothesis}. It returns three normalized scores corresponding to \textit{entailment}, \textit{contradiction}, and \textit{neutrality}. We evaluate the predicate pair in both directions:
\begin{equation}
\begin{aligned}
\boldsymbol{\pi}_{p\rightarrow q}
=
\left(
\pi^{\mathrm{ent}}_{p\rightarrow q},
\pi^{\mathrm{con}}_{p\rightarrow q},
\pi^{\mathrm{neu}}_{p\rightarrow q}
\right),\;\;
\boldsymbol{\pi}_{q\rightarrow p}
=
\left(
\pi^{\mathrm{ent}}_{q\rightarrow p},
\pi^{\mathrm{con}}_{q\rightarrow p},
\pi^{\mathrm{neu}}_{q\rightarrow p}
\right)
\end{aligned}
\end{equation}
Each vector contains three nonnegative values that sum to one. The first vector measures whether the text of $p$ supports, contradicts, or is neutral with respect to $q$; the second asks the converse question. Concatenating the two three-dimensional score vectors yields a six-dimensional representation: $\mathbf{z}(p,q) = \left[ \boldsymbol{\pi}_{p\rightarrow q}; \boldsymbol{\pi}_{q\rightarrow p} \right] \in[0,1]^6$.

Bidirectional entailment provides evidence for \rel{equiv}; entailment only from $p$ to $q$ provides evidence for \rel{forward}; and entailment only from $q$ to $p$ provides evidence for \rel{backward}. Contradiction provides evidence for \rel{disjoint}, while the remaining patterns generally indicate \rel{overlap}. However, directly applying these rules is unreliable because an NLI model is trained to classify sentence pairs rather than extension relationships between semantic filters, and its raw scores are not necessarily well calibrated~\cite{guo2017calibration}.

We therefore pass $\mathbf{z}(p,q)$ to a five-class multinomial logistic-regression head: $\boldsymbol{\rho}(p,q) = \operatorname{softmax}\!\left( \mathbf{W}\mathbf{z}(p,q)+\mathbf{b} \right)$, where the five entries of $\boldsymbol{\rho}$ correspond to the relationships in Equation~\ref{eq:relations}. The tier-one prediction and confidence are $\widehat{\mathit{rel}}_{1} = \arg\max_{r\in\mathcal R}\rho_r(p,q)$ and $c_1 = \rho_{\widehat{\mathit{rel}}_1}(p,q)$. The head is trained on approximately 300 labeled predicate pairs generated from a synthetic taxonomy of invented topics. These pairs are independent of all evaluation corpora.

\subsubsection{Tier two: LLM arbitration}
Tier one may be uncertain or may predict \rel{overlap}, which does not enable a rewrite. We therefore invoke a second-tier arbiter when $c_1<\tau_2$ or $\widehat{\mathit{rel}}_1=\rel{overlap}$. The arbiter receives only the two predicate sentences and the definitions of the five relationships, and it is instructed to select one relationship. In our implementation, it uses the same instruction-tuned model deployed as the tuple-level oracle; however, an arbitration request contains no corpus tuple and is substantially smaller than a semantic-filter evaluation.

We do not ask the arbiter to verbalize its confidence because such self-reported values are generally not calibrated. Instead, if the arbiter predicts relation $r$, we assign it the empirical precision of class $r$ measured on the corpus-independent development taxonomy:
\begin{equation}
    c_2(r)
    =
    \frac{
        \#\{\text{development pairs correctly predicted as }r\}
    }{
        \#\{\text{development pairs predicted as }r\}
    }.
\end{equation}
The tier-two prediction replaces the tier-one result. Every ordered predicate pair is memoized together with the reasoner, model, and prompt versions, so the arbitration cost is paid at most once for each configuration.

Neither tier is assumed to be correct. The predicted relationship and its confidence determine which reuse plan is attempted. An incorrect prediction may increase auditing or escalation costs, but the audit in Section~\ref{sec:audit} establishes the final guarantee. Section~\ref{sec:eval-entailment} evaluates the classification accuracy, calibration, and cost of the two-tier reasoner.

\subsection{Reuse Rules and Plan Construction}
\label{sec:rewrite}

The rewriter receives the judgments $\{(V_q,\widehat{\mathit{rel}}_{p,q},c_{p,q}) : V_q\in\mathcal V\}$. A judgment is eligible if its confidence exceeds the threshold associated with its predicted relation: $c_{p,q}\ge \tau_{\widehat{\mathit{rel}}_{p,q}}$ and $\widehat{\mathit{rel}}_{p,q} \neq\rel{overlap}$. The current implementation greedily selects eligible views in decreasing confidence order, subject to a maximum of $k_{\max}$ views. This cap limits the number of audit strata. It affects efficiency but not correctness.

Each selected view produces one or more provisional tuple-level claims:

\begin{description}
\item[R1: Superset pruning.]
If $\widehat{\mathit{rel}}_{p,q}\in
\{\rel{forward},\rel{equiv}\}$, then $q$ is predicted to contain $p$. Tuples in $D_q\setminus R_q$ are therefore provisionally classified as negative for $p$.

\item[R2: Positive reuse.]
If $\widehat{\mathit{rel}}_{p,q}\in
\{\rel{backward},\rel{equiv}\}$, then $q$ is predicted to be contained in $p$. Tuples in $R_q$ are therefore provisionally classified as positive for $p$.

\item[R3: Disjoint elimination.]
If $\widehat{\mathit{rel}}_{p,q}=\rel{disjoint}$, tuples in $R_q$ are provisionally classified as negative for $p$.

\item[R4: Equivalence.]
An equivalence judgment applies both R1 and R2. It provisionally reuses $R_q$ as the complete result over the view's coverage $D_q$. Tuples in $D\setminus D_q$ remain candidates.
\end{description}

These claims can be composed formally. Let
$\mathcal V_{\mathrm{sup}}$,
$\mathcal V_{\mathrm{pos}}$, and
$\mathcal V_{\mathrm{dis}}$
denote the selected views assigned to R1, R2, and R3, respectively. Equivalence views belong to both
$\mathcal V_{\mathrm{sup}}$ and
$\mathcal V_{\mathrm{pos}}$.
The accumulated positive and negative claims are
\begin{equation}
A_p^+
=
\bigcup_{V_q\in\mathcal V_{\mathrm{pos}}} R_q,
\;\;
A_p^-
=
(
\bigcup_{V_q\in\mathcal V_{\mathrm{sup}}}
(D_q\setminus R_q)
)
\cup
(
\bigcup_{V_q\in\mathcal V_{\mathrm{dis}}}
R_q
).
\label{eq:pos-and-neg}
\end{equation}
A tuple may receive both a positive and a negative claim. Because neither claim can be trusted before auditing, all such conflicts are returned to oracle evaluation. The three plan components are therefore
\begin{equation}
P_p^+ = A_p^+\setminus A_p^-,
\;\;\;
P_p^- = A_p^-\setminus A_p^+,
\;\;\;
C_p = D\setminus(P_p^+\cup P_p^-),
\label{eq:plan-partition}
\end{equation}
where $P_p^+$ contains the provisionally accepted tuples, $P_p^-$ contains the provisionally rejected tuples, and $C_p$ is the candidate set evaluated by the oracle. These sets form a disjoint partition of $D$. With no eligible view, $P_p^+=P_p^-=\emptyset$ and $C_p=D$, recovering cold evaluation.

Before auditing, the rewritten result is
\begin{equation}
\widetilde R_p
=
P_p^+
\cup
{x\in C_p:\Ora(p,x)=1}.
\label{eq:rewritten-result}
\end{equation}
The rewriter partitions $P_p^+$ and $P_p^-$ into disjoint strata according to the source view, rewrite rule, and confidence. When several views make the same nonconflicting claim, Algorithm~\ref{alg:plan} applies a deterministic priority order so that each assumed tuple belongs to exactly one stratum. These strata provide the provenance needed by the audit for sampling, correction, and escalation.

\input{algoritms/buildplan}

The following theorem characterizes the ideal case in which the provisional claims are exact.

\begin{theorem}[Rewrite correctness under exact inputs]
\label{thm:c1}
Assume that the oracle is deterministic. For every view $V_q$ selected by Algorithm~\ref{alg:plan}, suppose that
\(
R_q=\ext(q)\cap D_q
\)
and that
\(
\widehat{\mathit{rel}}_{p,q}
=
\mathsf{Rel}(p,q).
\)
Then the result $\widetilde R_p$ in Equation~\ref{eq:rewritten-result} equals $\ext(p)$, for each individual rewrite rule and for any combination of the rules.
\end{theorem}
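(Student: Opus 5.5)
The plan is to reduce the statement to two tuple-level soundness claims, $A_p^+\subseteq \ext(p)$ and $A_p^-\cap\ext(p)=\emptyset$. Write $P=\ext(p)$ and $Q=\ext(q)$. If both claims hold, then $A_p^+\cap A_p^-=\emptyset$, so no conflicts arise, $P_p^+=A_p^+$, and $P_p^-=A_p^-$. By Equation~\ref{eq:plan-partition}, $D$ is the disjoint union of $P_p^+$, $P_p^-$, and $C_p$. We then split $\widetilde R_p$ along this partition:
\[
\widetilde R_p\cap C_p=C_p\cap P, \qquad \widetilde R_p\cap P_p^+=P_p^+\subseteq P, \qquad \widetilde R_p\cap P_p^-=\emptyset .
\]
The first equality uses determinism of $\Ora$, so the oracle returns exactly membership in $P$ on $C_p$. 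The second uses soundness of $A_p^+$, and the third holds by the definition in Equation~\ref{eq:rewritten-result}. Since $P_p^-\cap P=\emptyset$ by soundness of $A_p^-$, taking the union gives $\widetilde R_p=P$.

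Because $A_p^+$ and $A_p^-$ are unions over the selected views (Equation~\ref{eq:pos-and-neg}), soundness only has to be checked view by view, one rule at a time. Throughout we use the exactness hypothesis $R_q=Q\cap D_q$.

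\emph{R2} applies when $\mathsf{Rel}(p,q)\in\{\rel{backward},\rel{equiv}\}$, so by Equation~\ref{eq:relations} $Q\subseteq P$. Hence $R_q\subseteq Q\subseteq P$, and the positive claim is sound.

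\emph{R1} applies when $\mathsf{Rel}(p,q)\in\{\rel{forward},\rel{equiv}\}$, so $P\subseteq Q$. Take $x\in D_q\setminus R_q$. Exactness gives $x\notin Q$, and therefore $x\notin P$.

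\emph{R3} applies when $\mathsf{Rel}(p,q)=\rel{disjoint}$, so $P\cap Q=\emptyset$. Then $R_q\cap P\subseteq Q\cap P=\emptyset$.

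\emph{R4} needs no separate argument, since an equivalence view is placed in both $\mathcal V_{\mathrm{sup}}$ and $\mathcal V_{\mathrm{pos}}$ and is covered by R1 and R2. Tuples outside $D_q$ receive no claim from that view, so they remain candidates unless another view claims them. Any such claim is sound by the same per-view argument.

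For an arbitrary combination of rules, we use that a union of subsets of $P$ is a subset of $P$, and a union of sets disjoint from $P$ is disjoint from $P$. The stratification and priority order in Algorithm~\ref{alg:plan} only partition $P_p^+$ and $P_p^-$ into strata. They do not change these sets, so they do not affect $\widetilde R_p$.

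The main obstacle I expect is the bookkeeping, not any single inequality. We must confirm that $\widetilde R_p$ depends only on the sets $P_p^+$ and $C_p$, and not on which view or stratum claimed a tuple. We also need the coverage restriction $D_q$ so that appended tuples, which might lie in $P$, are never treated as negatives. Exactness stated as $R_q=Q\cap D_q$ is precisely what makes R1 sound on $D_q$ and silent outside it.
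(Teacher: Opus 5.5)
Your proof is correct and complete. The paper states Theorem~\ref{thm:c1} without proof: Section~\ref{sec:claims} proves only Theorem~\ref{thm:c2} and Proposition~\ref{prop:prec}. So there is no published argument to compare against, and your argument supplies the missing one. It is the natural proof: reduce to the two soundness claims $A_p^+\subseteq\ext(p)$ and $A_p^-\cap\ext(p)=\emptyset$, check them view by view, close under unions, and then split $\widetilde R_p$ along the partition of Equation~\ref{eq:plan-partition}.

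The per-view decomposition also shows more than the statement asks.
\begin{itemize}
\item R2 and R3 use only $R_q\subseteq\ext(q)$, together with $\ext(q)\subseteq\ext(p)$ or $\ext(q)\cap\ext(p)=\emptyset$. R1 uses only $\ext(q)\cap D_q\subseteq R_q$ and $\ext(p)\subseteq\ext(q)$. Hence the exactness and relation-correctness hypotheses can be weakened rule by rule. For example, predicting \rel{forward} when the truth is \rel{equiv} is harmless. This sharpens the paper's remark that the exact-view condition is necessary: each rule needs only one side of it.
\item Because you never need an equivalence view to fire both R1 and R2, the argument also covers the per-set $k_{\max}$ cap in Algorithm~\ref{alg:plan}. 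That cap may keep such a view in only one of $B$ and $F$.
\item Once $A_p^+\cap A_p^-=\emptyset$, no tuple is claimed with both signs. The conflict-demotion step is then vacuous, and the first-claim-wins priority only assigns tuples to strata. So the algorithm's bookkeeping coincides with the set definitions in Equations~\ref{eq:pos-and-neg} and~\ref{eq:plan-partition}, as you observed.
\end{itemize}
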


The exact-view condition is necessary: a correct relationship between $p$ and $q$ does not, by itself, make an approximate $R_q$ exact. In practice, \system\ does not assume either condition of Theorem~\ref{thm:c1}. The audit verifies the plan directly against $\Ora(p,\cdot)$. Consequently, the confidence thresholds control a cost tradeoff: lower thresholds create more reuse opportunities but may require more auditing and escalation.

\subsection{Optional Composition with a Proxy Cascade}
\label{sec:compose}

Cross-query reuse and per-query proxy optimization are complementary. Reuse excludes tuples using results from earlier predicates, whereas a proxy ranks tuples using inexpensive scores learned specifically for the current predicate. We therefore apply an optional proxy only to the candidate set $C_p$ remaining after reuse, rather than to the full relation.

The proxy stage draws a uniform pilot sample from $C_p$, obtains its oracle labels, and fits a lightweight scoring model. Separate pilot subsets are used to select the cutoff and determine whether the proxy should be activated. The proxy is discarded unless validation indicates that it retains sufficiently many positives while pruning enough tuples to justify its pilot cost.

When activated, the proxy retains tuples above the cutoff as candidates and partitions those below it into provisionally negative score bands. These proxy-derived strata are combined with the strata created by the reuse plan and certified by the same audit. To preserve validity, the complete partition is fixed before fresh audit samples are drawn. If the certified recall lower bound remains below the target, \system\ evaluates the proxy strata in descending order of score. Our evaluation compares reuse alone, the proxy cascade alone, and their combination while charging all pilot, candidate-evaluation, auditing, and escalation calls to the same oracle-call budget.

\subsection{System Integration, Updates, and Scope}
\label{sec:deploy}

\system\ interposes immediately before a semantic filter issues tuple-level oracle calls. The host supplies an oracle interface $\Ora(p,\cdot)$ and the relevant tuple identifiers. This interface may internally use batching, a cascade, or a remote endpoint, provided that it returns a fixed binary response for each predicate--tuple pair. \system\ returns the result bitmap and recall and precision certificates relative to the host oracle without modifying the host's existing optimizer. We validate this integration using the released LOTUS implementation and measure the reduction in its oracle calls.

Reuse is restricted to lineage-compatible views whose oracle model and prompt-template versions match the current query. A model or prompt update therefore makes earlier views ineligible. Newly appended tuples fall outside a view's coverage $D_q$ and enter the candidate set, while deleted tuples are removed by masking their bitmap positions.

Our claims and evaluation concern unary semantic filters. The same auditing principle could be applied to tuple pairs for semantic joins, although the larger and typically sparser pair space may increase certification cost. Semantic maps generally support only exact or equivalence-based reuse because they lack subset structure. We leave both operators to future work.

\section{Audit and Certification}
\label{sec:audit}

A rewrite plan avoids oracle calls by provisionally accepting or rejecting some tuples. Errors among the provisionally accepted tuples reduce precision, whereas qualifying tuples among the provisionally rejected tuples reduce recall. The audit samples both groups, corrects the errors it observes, and converts the remaining uncertainty into the certificate required by Equation~\ref{eq:certificate}.

Before auditing begins, the plan partitions the relation into the candidate set $C_p$, assumed-positive strata $A_1,\ldots,A_S$, and pruned strata $P_1,\ldots,P_J$:
\begin{equation}
D
=
C_p
\cup
\bigcup_{i=1}^{S} A_i
\cup
\bigcup_{j=1}^{J} P_j,
\label{eq:audit-partition}
\end{equation}
where $C_p$, $A_1,\ldots,A_S$, and $P_1,\ldots,P_J$ are pairwise disjoint.

The pruned strata may originate from view-based rewrites or proxy score bands. Their memberships and escalation order are fixed before any audit sample is drawn. All candidates are evaluated by the oracle, while the two families of assumed strata are audited as described below.

\subsection{Exact Finite-Population Bounds}
\label{sec:audit-bounds}

Consider a finite population containing $M$ tuples, of which an unknown number $T$ are positive. If a uniform sample of size $m$ is drawn without replacement and contains $k$ positives, then
\(
    X\sim\mathrm{Hypergeom}(M,T,m).
\)
For a one-sided error level $\delta$, define the exact lower and upper confidence bounds
\begin{align}
    L_\delta(M,m,k)
    &=
    \min\left\{
        \ell:
        \Pr_{T=\ell}[X\ge k]\ge\delta
    \right\},\\
    U_\delta(M,m,k)
    &=
    \max\left\{
        u:
        \Pr_{T=u}[X\le k]\ge\delta
    \right\}.
\end{align}
These bounds are obtained by inverting the hypergeometric distribution, analogously to the Clopper--Pearson construction~\cite{clopper1934use}, and satisfy
\begin{align}
    \Pr_T\!\left[L_\delta(M,m,X)\le T\right]
    &\ge 1-\delta,\\
    \Pr_T\!\left[T\le U_\delta(M,m,X)\right]
    &\ge 1-\delta
\end{align}
for every $T\in\{0,\ldots,M\}$. No distributional or asymptotic approximation is required.

\subsection{Joint Recall and Precision Audit}
\label{sec:audit-construction}

\subsubsection{Auditing assumed-positive strata.}
For each assumed-positive stratum $A_i$, the audit draws a uniform sample $\Sigma_i$ and evaluates it with the oracle. Let $m_i=|\Sigma_i|$ and let $k_i$ be the number of sampled positives. Sampled negatives are removed from the reported result, while sampled positives become oracle-verified.

The audit computes a lower bound $L_i = L_{\delta_i}(|A_i|,m_i,k_i)$ on the total number of positives in $A_i$. A stratum whose audit result does not meet the required precision floor is \emph{demoted} and evaluated completely. Let $\mathcal K\subseteq\{1,\ldots,S\}$ denote the retained strata. For each retained stratum, at least
\(
    \max\{0,L_i-k_i\}
\)
additional positives remain among its unsampled tuples. The total credited contribution is therefore
\begin{equation}
    \mathit{extra}^{+}
    =
    \sum_{i\in\mathcal K}
    \max\{0,L_i-k_i\}.
    \label{eq:extra-positive}
\end{equation}

\subsubsection{Auditing pruned strata.}
Rather than certifying every pruned stratum separately, \system\ pools them: $P^{(0)} = \bigcup_{j=1}^{J}P_j$. A uniform sample is drawn from this pool. Pooling avoids paying for a separate confidence bound for every source view while retaining stratum identities for correction and escalation.

At an escalation level $\ell$, let $P^{(\ell)}$ be the active pruned pool, let $m^{(\ell)}$ be the number of sampled tuples that fall in this pool, and let $k^{(\ell)}$ of them be positive. The exact upper bound $U^{(\ell)} = U_{\eta_\ell} \left( |P^{(\ell)}|, m^{(\ell)}, k^{(\ell)} \right)$ bounds the total number of positives in the active pool. Because the $k^{(\ell)}$ sampled positives are corrected and included in the result, the number of positives that may remain undetected is bounded by
\begin{equation}
    \mathit{miss}^{(\ell)}
    =
    \max\{0,U^{(\ell)}-k^{(\ell)}\}.
    \label{eq:miss-bound}
\end{equation}

\subsubsection{Recall certificate.}
Let $F$ denote the number of oracle-verified positives in the final result. This count includes positives found among candidates, audit samples, demoted strata, and escalated strata. If the audit terminates at level $\ell$, then a lower bound on recall is
\begin{equation}
    b_R
    =
    \frac{
        F+\mathit{extra}^{+}
    }{
        F+\mathit{extra}^{+}+\mathit{miss}^{(\ell)}
    }.
    \label{eq:bound}
\end{equation}
The numerator lower-bounds the number of reported true positives, while $\mathit{miss}^{(\ell)}$ upper-bounds the qualifying tuples still hidden in the active pruned pool.

\subsubsection{Precision certificate.}
Let $V = \bigcup_{i\in\mathcal K} (A_i\setminus\Sigma_i)$ be the unverified tuples retained from assumed-positive strata. Every reported tuple is either one of the $F$ oracle-verified positives or a member of $V$, so
\(
    |R_p|=F+|V|.
\)
The same lower bounds used in the recall certificate imply
\begin{equation}
    b_P
    =
    \frac{
        F+\mathit{extra}^{+}
    }{
        F+|V|
    }.
    \label{eq:precbound}
\end{equation}
Thus, precision is certified without additional oracle calls beyond those already used to audit the assumed-positive strata.

\subsubsection{Confidence allocation.}
The total error budget is divided before sampling: $\alpha_{\mathrm{pos}} + \alpha_{\mathrm{prune}} \le \alpha$. For the $S$ assumed-positive strata, we set
\(
    \delta_i
    =
    \displaystyle\frac{\alpha_{\mathrm{pos}}}{S}.
\)
For the $J+1$ possible active pruned pools in the escalation chain, we set
\(
    \eta_\ell
    =
    \displaystyle\frac{\alpha_{\mathrm{prune}}}{J+1},
    \; \ell=0,\ldots,J.
\)
Empty families are omitted, and their budget can be reassigned. Because all stratum and escalation-level bounds are simultaneously valid, subsequent demotion and data-dependent stopping do not invalidate the final certificate.

\subsection{Adaptive Sampling and Escalation}
\label{sec:audit-adaptive}

\subsubsection{Assumed-positive sample sizes.}
A stratum is retained only when its audit supports a precision floor $\rho\in(0,1)$. If its true precision were at most $\rho$, the probability that a without-replacement sample of size $m_i$ contained no negative tuple would be at most $\rho^{m_i}$. Consequently,
\begin{equation}
    m_i
    =
    \left\lceil
        \frac{\ln\delta_i}{\ln\rho}
    \right\rceil
    \label{eq:positive-sample-size}
\end{equation}
makes a clean sample sufficient evidence at level $1-\delta_i$. The exact hypergeometric lower bound $L_i$ is still used in the certificate. Because Equation~\ref{eq:positive-sample-size} does not depend on $|A_i|$, auditing a fixed number of assumed-positive strata need not scale with the corpus size.

\subsubsection{Pruned-pool sample size.}
After auditing the assumed-positive strata, let
\(
    G=F+\mathit{extra}^{+}
\)
be the current lower bound on reported true positives. To certify recall target $t$, the number of missed positives must not exceed
\(
    G\displaystyle\frac{1-t}{t}.
\)
Using a safety factor $\gamma\in(0,1)$, we define $A=\gamma G(1-t)/t$. For an active pool of size $M_{\mathrm{pool}}$ with $0<A<M_{\mathrm{pool}}$, a conservative initial sample size is
\begin{equation}
    m_P
    =
    \min\left\{
        M_{\mathrm{pool}},
        \left\lceil
            \frac{\ln\eta_0}
                 {\ln(1-A/M_{\mathrm{pool}})}
        \right\rceil
    \right\}.
    \label{eq:msize}
\end{equation}
This follows because, if the pool contained at least $A$ positives, the probability of observing no positive in a sample of size $m_P$ is at most
$(1-A/M_{\mathrm{pool}})^{m_P}$. The exact hypergeometric upper bound, rather than this relaxation, is used in the published certificate. If $A<1$, the pool is evaluated completely; if $A\ge M_{\mathrm{pool}}$, its worst-case number of misses is already compatible with the target.

\subsubsection{Escalation}
If the bound in Equation~\ref{eq:bound} remains below $t$, \system\ fully evaluates one pruned stratum, removes it from the active pool, and recomputes the certificate. The escalation order is fixed before audit sampling; for example, proxy score bands are ordered from higher to lower scores, with remaining ties resolved deterministically. Therefore, the possible active pools form a fixed nested sequence $P^{(0)} \supsetneq P^{(1)} \supsetneq \cdots \supsetneq P^{(J)} = \emptyset$. The stopping level may depend on the observed audit results, but simultaneous bounds are constructed for every level. Escalation can therefore continue until the target is certified or the active pool is empty, in which case recall is exact.

\subsection{Guarantees and Proofs}
\label{sec:claims}

\begin{theorem}[Recall certificate]
\label{thm:c2}
For any query answered through a fixed rewrite plan, regardless of whether
the predicted predicate relationships are correct,
\begin{equation}
    \Pr\!\left[
        \mathrm{recall}_p(R_p)\ge b_R
    \right]
    \ge 1-\alpha.
\end{equation}
The probability is over the fresh audit samples, conditional on the query
history and rewrite plan. The guarantee assumes only that the oracle defines
a fixed reference output for every predicate--tuple pair.
\end{theorem}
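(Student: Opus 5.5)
The plan is to define a single "good event" $\mathcal{E}$ on which every exact finite-population bound used by the audit holds at once. We then show deterministically that on $\mathcal{E}$ the reported result satisfies $\mathrm{recall}_p(R_p)\ge b_R$, whatever the stopping level and whatever the reasoner predicted. Concretely, let $T_i$ be the true number of oracle positives in assumed-positive stratum $A_i$, and $T^{(\ell)}$ the true number in the active pruned pool $P^{(\ell)}$. Because the plan, the strata, and the nested escalation chain $P^{(0)}\supsetneq\cdots\supsetneq P^{(J)}$ are all fixed before sampling, each $T_i$ and $T^{(\ell)}$ is a fixed (if unknown) integer given $\mathcal H$ and the plan. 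Take $\mathcal{E}=\bigcap_i\{L_i\le T_i\}\cap\bigcap_{\ell}\{T^{(\ell)}\le U^{(\ell)}\}$.

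The first step is to bound the probability of $\mathcal{E}$. Conditional on the plan, the sample from $A_i$ is uniform without replacement, so $k_i\sim\mathrm{Hypergeom}(|A_i|,T_i,m_i)$. The coverage property of $L_\delta$ from Section~\ref{sec:audit-bounds} then gives $\Pr[L_i>T_i]\le\delta_i$. For the pooled sample, I would argue that the tuples of a uniform sample from $P^{(0)}$ that fall in the fixed subset $P^{(\ell)}$ form, conditional on their count $m^{(\ell)}$, a uniform without-replacement sample of $P^{(\ell)}$. Hence, conditionally on $m^{(\ell)}$, $k^{(\ell)}$ is hypergeometric and the coverage of $U_{\eta_\ell}$ holds; averaging over $m^{(\ell)}$ preserves it. A union bound over the $S$ strata and the $J+1$ levels gives $\Pr[\mathcal{E}^c]\le S\cdot\alpha_{\mathrm{pos}}/S+(J+1)\alpha_{\mathrm{prune}}/(J+1)\le\alpha$. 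The bounds must hold at every level, not only at the level chosen, because the stopping level and the demotion decisions depend on the data. This is the reason for the simultaneous allocation.

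The second step is the deterministic argument on $\mathcal{E}$. Every true positive of $\ext(p)$ lies in $C_p$, in some $A_i$, or in some $P_j$. Those in $C_p$, in demoted strata, in escalated strata, and in audit samples are all oracle-verified and counted in $F$. At the terminal level $\ell$, the only positives not in $R_p$ are the unsampled positives of $P^{(\ell)}$. Unsampled tuples of retained assumed-positive strata are in $R_p$ whether or not they are positive, so they cause no recall loss. The number of missed positives is therefore $T^{(\ell)}-k^{(\ell)}\le \mathit{miss}^{(\ell)}$.

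It remains to bound the number of true positives in $R_p$. It is at least $F$ plus the unsampled positives of the retained strata, namely $\sum_{i\in\mathcal K}(T_i-k_i)\ge\mathit{extra}^+$. With $\mathrm{TP}=|R_p\cap\ext(p)|$, recall equals $\mathrm{TP}/(\mathrm{TP}+\text{missed})$. The map $x\mapsto x/(x+c)$ is increasing in $x$ and decreasing in $c$, so recall is at least $(F+\mathit{extra}^+)/(F+\mathit{extra}^++\mathit{miss}^{(\ell)})=b_R$. The degenerate case $\ext(p)=\emptyset$ gives recall $1$ by convention. I need to make sure that demoted strata neither contribute to $\mathit{extra}^+$ nor are double counted. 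This holds because $\mathcal{K}$ excludes them, and their positives enter $F$ exactly once.

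The main obstacle is the pooled-sample step. The distribution of $k^{(\ell)}$ at deeper escalation levels is only hypergeometric conditionally on $m^{(\ell)}$, which is random. So the coverage of $U_{\eta_\ell}$ must be justified conditionally and then integrated. If the paper instead intends fresh samples per level, the same argument applies more simply. Throughout, nothing about the correctness of $\widehat{\mathit{rel}}$ or the exactness of the views is used. This is exactly why the guarantee holds regardless of reasoner errors.
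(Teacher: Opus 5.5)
Your proposal is correct and follows the paper's proof essentially step for step. Your event $\{T^{(\ell)}\le U^{(\ell)}\}$ is the paper's $\mathcal G_\ell$ rewritten, and the restriction of the pooled sample to each prespecified $P^{(\ell)}$ is handled by the same conditional-uniformity argument; a union bound then gives the simultaneous event, and your deterministic step is the paper's anatomy lemma. The one point the paper states explicitly that you pass over is that the pooled sample size $m_P$ depends, through $G=F+\mathit{extra}^{+}$, on the assumed-positive audit: you need that the pruned-pool draw uses independent randomness on a disjoint tuple set, so that conditioning on the assumed-positive outcomes fixes $m_P$ while leaving $\Sigma_P$ uniform.
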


\begin{proposition}[Precision certificate]
\label{prop:prec}
The precision bound in Equation~\ref{eq:precbound} holds on the same
simultaneous-confidence event as the recall bound. Consequently,
\begin{equation}
    \Pr\!\left[
        \mathrm{recall}_p(R_p)\ge b_R
        \;\land\;
        \mathrm{precision}_p(R_p)\ge b_P
    \right]
    \ge 1-\alpha.
\end{equation}
\end{proposition}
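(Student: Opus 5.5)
The plan is to work on the simultaneous-confidence event $\mathcal{E}$ that already underlies the proof of Theorem~\ref{thm:c2}. This event intersects two families of events. The first family is $\{L_{\delta_i}(|A_i|,m_i,X_i)\le T_i\}$ for $i=1,\ldots,S$, where $T_i$ is the true number of positives in $A_i$. The second family is $\{T^{(\ell)}\le U_{\eta_\ell}(|P^{(\ell)}|,m^{(\ell)},X^{(\ell)})\}$ for $\ell=0,\ldots,J$, where $T^{(\ell)}$ counts positives in the fixed nested pool $P^{(\ell)}$. All strata and pools are fixed before sampling, and each one-sided hypergeometric bound in Section~\ref{sec:audit-bounds} holds with probability at least $1-\delta_i$ or $1-\eta_\ell$. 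A union bound then gives
\[
\Pr[\mathcal{E}^c]\le \sum_{i} \delta_i+\sum_\ell \eta_\ell\le \alpha_{\mathrm{pos}}+\alpha_{\mathrm{prune}}\le\alpha .
\]
The proof of Theorem~\ref{thm:c2} shows that $\mathrm{recall}_p(R_p)\ge b_R$ holds on $\mathcal{E}$. It therefore suffices to show that $\mathrm{precision}_p(R_p)\ge b_P$ also holds deterministically on $\mathcal{E}$. The joint statement then follows with the same probability bound, and no extra budget is spent.

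For the precision step, write $|R_p\cap\ext(p)|$ as a sum over the disjoint parts of the reported result. The reported set is the union of the $F$ oracle-verified positives and the unverified retained tuples $V=\bigcup_{i\in\mathcal K}(A_i\setminus\Sigma_i)$. Sampled negatives and oracle-rejected candidates have been removed, so $|R_p|=F+|V|$ exactly. The verified part contributes exactly $F$ true positives. For each retained $i\in\mathcal K$, the unsampled part $A_i\setminus\Sigma_i$ contains $T_i-k_i$ positives. On $\mathcal{E}$ this count is at least $L_i-k_i$, and it is trivially at least $0$. Summing over $i\in\mathcal K$ gives $|R_p\cap\ext(p)|\ge F+\mathit{extra}^{+}$, and dividing by $|R_p|$ gives $b_P$. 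If $|R_p|=0$, precision is defined as one, so the bound is trivial.

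The main obstacle is the data dependence of the retained set $\mathcal K$ and the stopping level $\ell$. Both are chosen after seeing the audit samples, so one might worry that selecting which bounds to use biases them. The fix is that $\mathcal{E}$ asserts all $S$ stratum bounds and all $J+1$ pool bounds at once. Any data-dependent selection of a subset of these bounds is therefore still valid on $\mathcal{E}$. This is also why $\mathcal K$ must index strata fixed in the plan, and why demoted strata must leave the count only through $F$. I would state this carefully, noting that demoted strata are fully evaluated and thus contribute only to $F$. I would also reuse the argument of Theorem~\ref{thm:c2} verbatim for the recall half, so that both inequalities are shown to hold on the single event $\mathcal{E}$.
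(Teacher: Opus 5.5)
Your proposal matches the paper's proof: it uses the same simultaneous event over all $S$ stratum bounds and $J+1$ pool bounds, obtained by a union bound, the decomposition $|R_p|=F+|V|$ from Lemma~\ref{lem:anatomy}, and the per-stratum inequality $M(A_i\setminus\Sigma_i)=M(A_i)-k_i\ge\max\{0,L_i-k_i\}$ summed over $\mathcal K$. One step you pass over is that for $\ell\ge 1$ the pool bound applies to the restricted sample $\Sigma_P\cap P^{(\ell)}$, whose size is random, and that $m_P$ depends on the assumed-positive audit; the paper handles this with Lemma~\ref{lem:cond} and the independence of the two sample families, and since you take the event's probability from Theorem~\ref{thm:c2}, your precision argument is unaffected.
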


\begin{proposition}[Graceful degradation]
\label{prop:c3}
With escalation enabled, \system\ can always reach the requested recall
target by evaluating additional pruned strata. In the worst case, the active
pruned pool becomes empty and execution approaches cold evaluation rather
than returning an uncertified result.
\end{proposition}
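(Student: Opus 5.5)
The proof will be a finite induction along the fixed escalation chain $P^{(0)}\supsetneq P^{(1)}\supsetneq\cdots\supsetneq P^{(J)}=\emptyset$ from Section~\ref{sec:audit-adaptive}. It has two parts: show that escalation terminates, and show that at termination $b_R\ge t$. The termination part is the easy one. Each escalation step fully evaluates one pruned stratum and removes it from the active pool. The plan contains only $J$ pruned strata, and their order is fixed before sampling. Hence after at most $J$ steps the loop either stops because Equation~\ref{eq:bound} already certifies $t$, or it reaches level $J$ with an empty active pool. In the first case the claim holds by the stopping rule, so all the work lies in the second case.

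In the empty-pool case I evaluate Equation~\ref{eq:miss-bound} at $\ell=J$. Since $|P^{(J)}|=0$, the only admissible sample has $m^{(J)}=k^{(J)}=0$. The hypergeometric upper bound for a population of size zero is $U_{\eta_J}(0,0,0)=0$, because $T$ can only equal $0$. Therefore $\mathit{miss}^{(J)}=0$. Substituting into Equation~\ref{eq:bound} gives $b_R=(F+\mathit{extra}^+)/(F+\mathit{extra}^+)=1\ge t$. This uses the paper's convention that a ratio with zero denominator equals one, which covers the case $F+\mathit{extra}^+=0$. The certificate is also genuinely valid on the simultaneous event of Theorem~\ref{thm:c2}, since the level-$J$ bound is a trivially true statement about an empty set. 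So the output is never uncertified: either the target is met at some earlier level, or recall is certified at $1$.

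For the ``approaches cold evaluation'' clause I will count oracle calls in the worst case. At level $J$, every tuple in $C_p$, in the demoted strata, in the audit samples $\Sigma_i$, and in every pruned stratum has been evaluated by the oracle. The only unevaluated tuples are the unsampled members of the retained assumed-positive strata. Their total size is bounded by $|V|$, and they carry their own lower bound through $\mathit{extra}^+$. The total cost is therefore at most $N$ plus the audit overhead $\sum_i m_i+m_P$, where the $m_i$ do not depend on $N$ by Equation~\ref{eq:positive-sample-size}. In particular, the cost exceeds cold evaluation by at most the audit overhead.

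The point I expect to be the main obstacle is the case $A<1$ or $A\ge M_{\mathrm{pool}}$ in Equation~\ref{eq:msize}. I must check that the sampling rule never prevents an escalation step, and that the pool is ``evaluated completely'' consistently with the chain. I will argue that complete evaluation of a pool is equivalent to $m^{(\ell)}=|P^{(\ell)}|$. Then $U^{(\ell)}=k^{(\ell)}$ is exact, so $\mathit{miss}^{(\ell)}=0$, and the same conclusion as at level $J$ follows earlier. A related subtlety is that recall, not just $b_R$, reaches the target once $\mathit{miss}=0$. With an empty pool, every unreported positive would have to lie in $D\setminus R_p$. But every such tuple is oracle-evaluated, so it is not a positive, and recall is exactly one. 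On the confidence event the remaining assumed-positive tuples only add reported positives and cannot reduce recall.
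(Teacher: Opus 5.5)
Your proposal is correct and follows essentially the paper's route: the paper's justification (the escalation paragraph together with Lemmas~\ref{lem:chain} and~\ref{lem:anatomy}) is that the prespecified nested chain terminates at $P^{(J)}=\emptyset$, where $\mathit{miss}^{(J)}=0$ and hence $b_R=1\ge t$, and Lemma~\ref{lem:anatomy} then gives recall exactly one. Your cost accounting and the full-evaluation case ($m^{(\ell)}=|P^{(\ell)}|$ forcing $U^{(\ell)}=k^{(\ell)}$) are consistent refinements of that sketch; exact recall at the empty pool holds deterministically, so your qualifier ``on the confidence event'' is unnecessary there.
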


\begin{proposition}[Audit sample complexity]
\label{prop:c4}
Suppose the query selectivity is $s$, so that the number of reported true
positives is approximately $sN$. Under a clean pruned-pool sample,
Equation~\ref{eq:msize} gives $m_P = O\bigl(t\ln(1/\eta_0)/(\gamma(1-t)s)\bigr)$, which is independent of the corpus size $N$. This statement concerns the
initial audit sample; escalation cost can grow with the sizes of incorrectly
pruned strata.
\end{proposition}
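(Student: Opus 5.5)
The plan is to start from Equation~\ref{eq:msize} in the regime $0<A<M_{\mathrm{pool}}$, where the second argument of the minimum is active (otherwise $m_P\le M_{\mathrm{pool}}$, which only helps). We bound the denominator $-\ln(1-A/M_{\mathrm{pool}})$ from below using the elementary inequality $-\ln(1-u)\ge u$ for $u\in[0,1)$. This gives
\[
m_P\le \left\lceil \frac{\ln(1/\eta_0)\,M_{\mathrm{pool}}}{A}\right\rceil \le \frac{M_{\mathrm{pool}}\ln(1/\eta_0)}{A}+1 .
\]
The key point is that the pool size enters only through the ratio $M_{\mathrm{pool}}/A$, so the task reduces to showing that this ratio does not grow with $N$.

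Next we substitute $A=\gamma G(1-t)/t$, using the selectivity assumption that the reported true-positive lower bound satisfies $G=F+\mathit{extra}^+\approx sN$. This yields
\[
\frac{M_{\mathrm{pool}}}{A}\approx \frac{t\,M_{\mathrm{pool}}}{\gamma(1-t)sN}.
\]
Since the active pool is a subset of $D$, we have $M_{\mathrm{pool}}\le N$, and the ratio is at most $t/(\gamma(1-t)s)$. Combining this with the previous display gives $m_P\le t\ln(1/\eta_0)/(\gamma(1-t)s)+1=O\bigl(t\ln(1/\eta_0)/(\gamma(1-t)s)\bigr)$. No term depends on $N$, and $\eta_0=\alpha_{\mathrm{prune}}/(J+1)$ depends only on the plan's stratum count and not on the corpus size.

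The main obstacle is making the phrase ``approximately $sN$'' rigorous, because $G$ is a random lower bound rather than the true positive count. I would handle this by stating the bound in terms of $G$ directly: $m_P\le tM_{\mathrm{pool}}\ln(1/\eta_0)/(\gamma(1-t)G)+1$. I would then interpret the selectivity hypothesis as $G\ge c\,sN$ for a constant $c$, absorbing $c$ into the $O(\cdot)$. The clean-sample qualifier just means we consider the initial draw at level $0$ before any escalation. Escalation is excluded from the statement, and I would remark that its cost is additive in the sizes of the escalated strata $|P_j|$, which can be $\Theta(N)$ when a view is wrongly pruned.

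Finally, I would note the degenerate cases. If $A<1$, the pool is fully evaluated; this happens only when $G<t/(\gamma(1-t))$, which for fixed $s$ and large $N$ does not occur. If $A\ge M_{\mathrm{pool}}$, no sample is needed at all. Both cases are therefore consistent with the stated bound.
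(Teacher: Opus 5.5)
Your argument is correct and is essentially the derivation the paper intends: the paper gives no separate proof and reads the bound directly off Equation~\ref{eq:msize} using $A=\gamma G(1-t)/t$, $G\approx sN$, $M_{\mathrm{pool}}\le N$, and $-\ln(1-u)\ge u$, exactly as you do. One small correction: ``clean'' means the pruned-pool sample contains no positives ($k^{(0)}=0$), which is the case where this initial sample already certifies the target and no escalation follows; since $m_P$ is fixed before sampling, this reading does not affect your bound.
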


\begin{proposition}[No cross-query error compounding]
\label{prop:c5}
Every audit evaluates fresh samples with the current oracle. Cached views
determine the proposed partition but do not supply evidence for the current
certificate. Therefore, a query derived through several generations of
cached views receives a new guarantee against the oracle, rather than
inheriting or multiplying the errors of earlier certificates.
\end{proposition}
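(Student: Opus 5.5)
The plan is to reduce Proposition~\ref{prop:c5} to Theorem~\ref{thm:c2} and Proposition~\ref{prop:prec}. The key observation is that those results are stated conditional on the history $\mathcal H$ and for an \emph{arbitrary} fixed rewrite plan, with no assumption on the correctness of $E$ or of the cached views. First I will make precise what a query at generation $n$ depends on. Its history $\mathcal H_n$ contains the cached views, and those views may themselves have been produced by earlier reuse and earlier audits. The rewrite plan is the partition of Equation~\ref{eq:audit-partition} together with the fixed escalation order, and it is computed from $\mathcal H_n$ and $p$ alone. So, conditional on $\mathcal H_n$, the sets $C_p$, $A_1,\ldots,A_S$, $P_1,\ldots,P_J$ and the nested pools $P^{(0)}\supsetneq\cdots\supsetneq P^{(J)}$ are deterministic. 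The cached bitmaps $R_q$ enter only here, in choosing which tuples go into which stratum.

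Second, I will trace every quantity in $b_R$ and $b_P$ (Equations~\ref{eq:bound} and~\ref{eq:precbound}) and show it is a function of three things only: (i) these fixed sets, (ii) the fresh audit randomness $\mathcal A$, and (iii) the current labels $\Ora(p,\cdot)$ on tuples evaluated during this query. These are $F$, $\mathit{extra}^+$ (built from the $L_i$), $\mathit{miss}^{(\ell)}$ (built from the $U^{(\ell)}$) and $|V|$. No quantity read from a previous certificate, and no stored label $R_q$, appears as evidence in $L_{\delta_i}$ or $U_{\eta_\ell}$. Their arguments are population sizes, sample sizes, and counts of positives under the current oracle. The true positive counts $T$ in each stratum are fixed numbers once $\mathcal H_n$ is fixed, because $\Ora$ is fixed per predicate--tuple pair. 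The hypergeometric coverage statements of Section~\ref{sec:audit-bounds} therefore apply verbatim, whatever those counts are. This includes the case where errors in earlier views made them badly wrong. Such errors can only change $T$, and hence the cost of demotion and escalation; they cannot change the coverage probabilities. By the union bound over the $S+J+1$ intervals, the simultaneous-confidence event has probability at least $1-\alpha$ given $\mathcal H_n$, and Proposition~\ref{prop:prec} gives the joint certificate.

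Third, I will argue that errors do not compound across generations. Because the failure probability at generation $n$ is at most $\alpha$ \emph{conditional on every realization} of $\mathcal H_n$, the tower property gives the unconditional bound $\Pr[\text{certificate}_n \text{ fails}]\le\alpha$ as well. This holds regardless of whether earlier certificates succeeded, so no factor like $1-(1-\alpha)^n$ or any accumulation of earlier slack appears. The main obstacle is the adaptivity: $\mathcal H_n$ is itself a function of earlier audit samples, and within the query the stopping level and the demotions are data dependent. I will handle the first by requiring the current audit's randomness to be drawn independently of $\mathcal H_n$, so that conditioning on $\mathcal H_n$ leaves the sampling distribution uniform over the fixed strata. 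I will handle the second by noting that the simultaneous bounds cover every stratum and every escalation level before sampling, as argued in the confidence-allocation paragraph. The guarantee is per query; a family-wise guarantee over many queries would still need its own union bound, and the proposition does not claim one.
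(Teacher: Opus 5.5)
Your proposal is correct and follows the paper's reasoning. The paper gives no separate proof: it rests the proposition on Theorem~\ref{thm:c2} and Proposition~\ref{prop:prec} holding conditionally on the history for an arbitrary fixed plan, with every bound computed only from fresh samples labeled by the current oracle, which is your reduction (your tower-property step is a harmless extra). One small correction is needed: when the optional proxy is active, the partition also depends on the pilot sample, so it is not a function of $\mathcal H_n$ and $p$ alone; condition on the plan as well, as Theorem~\ref{thm:c2} does, and the rest goes through unchanged.
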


\noindent\textit{Scope of pipeline composition.}
The per-query guarantees above do not imply that recall ratios can be
multiplied across successive semantic filters. End-to-end certification
requires propagating absolute missed-positive bounds and accounting for
false positives passed between stages. Developing such a pipeline-level
certificate is outside the scope of this work.

Equation~\ref{eq:msize} also identifies the regime in which certification
becomes expensive: the required sample grows as $t$ approaches one or as
selectivity $s$ decreases. The certificate remains valid in this regime,
but escalation may eliminate most of the savings over cold evaluation.

For any $X\subseteq D$, define
\(
    M(X)=|X\cap\ext(p)|.
\)
For each assumed-positive stratum $A_i$, let $\Sigma_i$ denote its uniform
audit sample. Let $\Sigma_P$ be the uniform sample drawn from the initial
pruned pool $P^{(0)}$. At escalation level $\ell$, define
\(
    \Sigma_P^{(\ell)}
    =
    \Sigma_P\cap P^{(\ell)},
    \;
    m^{(\ell)}
    =
    |\Sigma_P^{(\ell)}|,
    \;
    k^{(\ell)}
    =
    M(\Sigma_P^{(\ell)}).
\)

\begin{lemma}[Anatomy of the reported result]
\label{lem:anatomy}
At termination, $R_p$ is the disjoint union of the $F$ oracle-verified
positives and
\(
    V
    =
    \bigcup_{i\in\mathcal K}
    (A_i\setminus\Sigma_i).
\)
Hence $|R_p|=F+|V|$. If the audit stops at escalation level $L$, every
unreported positive lies in
$P^{(L)}\setminus\Sigma_P^{(L)}$.
\end{lemma}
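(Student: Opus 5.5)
The plan is to prove Lemma~\ref{lem:anatomy} by tracking how the audit procedure builds $R_p$ from the plan partition in Equation~\ref{eq:audit-partition}. The lemma is essentially bookkeeping. I would first fix, at termination, the fate of every tuple in each block of the partition, and then read off both claims. The blocks and their fates are as follows.
\begin{itemize}
\item Every tuple in $C_p$ is evaluated by the oracle. It is reported exactly when it is positive.
\item In each assumed-positive stratum $A_i$, every tuple in $\Sigma_i$ is evaluated. Sampled negatives are removed, and sampled positives become verified.
\item If $A_i$ is demoted ($i\notin\mathcal K$), the whole stratum is evaluated, so only its oracle positives are reported.
\item If $A_i$ is retained ($i\in\mathcal K$), the unsampled tuples $A_i\setminus\Sigma_i$ are reported without verification.
\item Every sampled tuple of the pruned pool, $\Sigma_P$, is evaluated, and its positives are corrected into the result.
\item Each escalated stratum $P_j$ (those in $P^{(0)}\setminus P^{(L)}$) is fully evaluated, and its positives are reported.
\item Unsampled tuples of the final active pool $P^{(L)}$ are never reported.
\end{itemize}

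From this case analysis I would define $\mathcal F$ as the set of tuples that were evaluated by the oracle, returned $1$, and were reported, so that $F=|\mathcal F|$. The reported result is then $R_p=\mathcal F\cup V$. Every reported tuple either was evaluated, in which case it was kept only if positive and so lies in $\mathcal F$, or was not evaluated. Among the blocks above, the only unevaluated tuples that are reported are those in $A_i\setminus\Sigma_i$ with $i\in\mathcal K$, that is, the tuples of $V$.

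Disjointness of $\mathcal F$ and $V$ follows because $V$ consists only of unevaluated tuples. Since the oracle is fixed per predicate--tuple pair, "evaluated" and "unevaluated" are well-defined, disjoint statuses. Combining $R_p=\mathcal F\cup V$ with $\mathcal F\cap V=\emptyset$ gives $|R_p|=F+|V|$.

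For the second claim, I would take an arbitrary positive $x\in\ext(p)\setminus R_p$ and locate it using the pairwise disjointness of the partition.
\begin{itemize}
\item $x$ cannot lie in $C_p$, because all of $C_p$ is evaluated and its positives are reported.
\item $x$ cannot lie in any $A_i$. Retained strata are reported entirely, apart from sampled negatives. Demoted strata report all their positives.
\item $x$ cannot lie in an escalated stratum, since escalated strata are fully evaluated.
\item $x$ cannot lie in $\Sigma_P^{(L)}$, since sampled positives are corrected into the result.
\end{itemize}
Hence $x\in P^{(L)}\setminus\Sigma_P^{(L)}$. This uses $\Sigma_P\cap P^{(L)}=\Sigma_P^{(L)}$ and the nested-pool structure $P^{(L)}\subseteq P^{(0)}$.

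The main obstacle is pinning down the procedure's semantics precisely enough to justify this case analysis, particularly the accounting for $\Sigma_P$. Sampled tuples of $\Sigma_P$ that later fall into escalated strata are evaluated twice. They must be counted once in $F$, and the fixed-oracle assumption guarantees that both evaluations agree. A second point to check is that $F$ genuinely counts oracle-verified positives from every source: candidates, audit samples, demoted strata, and escalated strata. I would state these conventions explicitly at the start of the proof.
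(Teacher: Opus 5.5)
Your proposal is correct and takes essentially the same route as the paper. The paper's proof is a compressed version of your case analysis: it sorts every tuple into candidates, audit samples, fully evaluated demoted or escalated strata, unverified tuples of retained assumed-positive strata (which form $V$), and unsampled tuples of the active pool $P^{(L)}$, and reads off both claims; your remark that twice-evaluated tuples of $\Sigma_P$ count once in $F$ is a harmless refinement the paper leaves implicit.
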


\begin{proof}
Every tuple is a candidate, belongs to an audit sample, lies in a fully
evaluated demoted or escalated stratum, remains unverified in a retained
assumed-positive stratum, or remains unsampled in the active pruned pool.
Tuples in the first three categories are reported exactly according to the
oracle. Tuples in the fourth category constitute $V$, and only tuples in
the final category can contain an unreported positive.
\end{proof}

\begin{lemma}[Conditional uniformity]
\label{lem:cond}
Let $\Sigma$ be a uniform sample of size $m$ drawn without replacement
from a finite population $\Pi$. For any fixed $\Pi'\subseteq\Pi$,
conditional on $|\Sigma\cap\Pi'|=m'$, the restricted sample
$\Sigma\cap\Pi'$ is uniform among all size-$m'$ subsets of $\Pi'$.
\end{lemma}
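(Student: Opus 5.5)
The approach is a direct counting argument. Uniform sampling without replacement means every size-$m$ subset of $\Pi$ has probability $1/\binom{|\Pi|}{m}$. Fix $\Pi'\subseteq\Pi$ and an integer $m'$ with $\Pr[|\Sigma\cap\Pi'|=m']>0$. For any particular subset $S'\subseteq\Pi'$ with $|S'|=m'$, I would compute the conditional probability
\[
\Pr\bigl[\Sigma\cap\Pi'=S' \,\big|\, |\Sigma\cap\Pi'|=m'\bigr]
=\frac{\Pr[\Sigma\cap\Pi'=S']}{\Pr[|\Sigma\cap\Pi'|=m']}.
\]
The goal is to show that this does not depend on $S'$.

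For the numerator, the event $\Sigma\cap\Pi'=S'$ holds exactly when $\Sigma=S'\cup S''$. Here $S''$ is a subset of $\Pi\setminus\Pi'$ of size $m-m'$. The number of such samples is $\binom{|\Pi|-|\Pi'|}{m-m'}$, whichever $S'$ is chosen. So the numerator equals $\binom{|\Pi|-|\Pi'|}{m-m'}/\binom{|\Pi|}{m}$, which is the same for every $S'$ of size $m'$.

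For the denominator, it is the sum of the numerator over all $\binom{|\Pi'|}{m'}$ choices of $S'$. This is just the hypergeometric probability $\binom{|\Pi'|}{m'}\binom{|\Pi|-|\Pi'|}{m-m'}/\binom{|\Pi|}{m}$. Dividing gives $1/\binom{|\Pi'|}{m'}$, which is the uniform distribution on size-$m'$ subsets of $\Pi'$.

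There is no genuine obstacle here. The only care needed is on the side conditions. First, $\Pi'$ must be fixed independently of $\Sigma$. In the paper's use this holds because the nested pools $P^{(\ell)}$ are fixed before sampling; that is exactly what lets the lemma apply at each escalation level. Second, conditioning is only meaningful for feasible $m'$, namely $\max\{0,\,m-|\Pi\setminus\Pi'|\}\le m'\le\min\{m,|\Pi'|\}$.

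I would end by noting the consequence used downstream. Conditional on $m^{(\ell)}$, the count $k^{(\ell)}$ is distributed as $\mathrm{Hypergeom}(|P^{(\ell)}|,M(P^{(\ell)}),m^{(\ell)})$. Hence the exact bounds of Section~\ref{sec:audit-bounds} apply conditionally at each level, and so also unconditionally.
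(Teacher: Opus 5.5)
Your proof is correct and follows the paper's argument: both rest on the observation that exactly $\binom{|\Pi\setminus\Pi'|}{m-m'}$ size-$m$ samples satisfy $\Sigma\cap\Pi'=S'$, a count that does not depend on $S'$. The explicit normalization to $1/\binom{|\Pi'|}{m'}$ and the feasibility condition on $m'$ are details the paper leaves implicit.
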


\begin{proof}
For any $s'\subseteq\Pi'$ with $|s'|=m'$, the number of samples $\Sigma$
satisfying $\Sigma\cap\Pi'=s'$ is
\(
    \displaystyle\binom{|\Pi\setminus\Pi'|}{m-m'},
\)
which does not depend on the choice of $s'$. All size-$m'$ subsets are
therefore equally likely.
\end{proof}

\begin{lemma}[Fixed escalation chain]
\label{lem:chain}
Because the escalation order is determined by the plan before audit
sampling, the possible active pools
$P^{(0)},\ldots,P^{(J)}$ form a fixed nested sequence. Although the
termination level $L$ is data-dependent, $P^{(L)}$ is always selected
from this prespecified family.
\end{lemma}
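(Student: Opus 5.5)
The approach is to show that the whole family $P^{(0)},\ldots,P^{(J)}$ is a deterministic function of the rewrite plan, fixed before any audit randomness is drawn, and is therefore independent of $\Sigma_P$ and the $\Sigma_i$. The termination level $L$ then only picks an index into a fixed list.

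\textbf{Constructing the chain from the plan.} First I will spell out how the chain is built. Before sampling, Algorithm~\ref{alg:plan} and the optional proxy stage produce the pruned strata $P_1,\ldots,P_J$ and a total escalation order $\pi$ on $\{1,\ldots,J\}$. For proxy bands, $\pi$ runs from higher to lower score; remaining ties are broken deterministically. Define
\[
P^{(\ell)}=\bigcup_{j>\ell}P_{\pi(j)},\qquad \ell=0,\ldots,J.
\]
Then $P^{(0)}$ is the full pool and $P^{(J)}=\emptyset$. Because the strata are pairwise disjoint (Equation~\ref{eq:audit-partition}), the sequence is nested. If strata are nonempty, each inclusion is strict, since removing $P_{\pi(\ell)}$ deletes at least one tuple. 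Every ingredient here, namely the strata, their memberships, and $\pi$, is a function of the plan and the history $\mathcal H$ only, so the family is fixed conditional on the plan.

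\textbf{Escalation never leaves the chain.} Next I will argue by induction on the number of escalation steps that the active pool after $\ell$ steps is exactly $P^{(\ell)}$. The base case is that the active pool starts as $P^{(0)}$ by definition. For the inductive step, escalation always fully evaluates the next stratum $P_{\pi(\ell+1)}$ in the fixed order and removes it from the pool. The audit outcomes influence only \emph{whether} another step is taken, never \emph{which} stratum is removed. This gives $P^{(\ell+1)}=P^{(\ell)}\setminus P_{\pi(\ell+1)}$. Demotion of assumed-positive strata touches only $A_i$'s, not the pruned pool, so it cannot perturb the chain.

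\textbf{The data-dependent stopping level.} The rule stops at the first $\ell$ with $b_R\ge t$, or at $\ell=J$. So $L$ is a random variable taking values in $\{0,\ldots,J\}$. Consequently $P^{(L)}\in\{P^{(0)},\ldots,P^{(J)}\}$ on every sample path, and this family does not depend on the samples.

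\textbf{Main obstacle.} The step needing the most care is making precise that no part of the escalation order is adapted to the audit samples. For instance, the proxy band ordering or tie-breaking must not depend on sampled labels. I would handle this by citing the requirement, stated in the description of escalation and of the proxy stage, that the complete partition and order are fixed before fresh audit samples are drawn. Any implementation that reorders strata adaptively falls outside the lemma's hypothesis.
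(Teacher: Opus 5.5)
Your proposal is correct and follows the paper's own reasoning. The paper gives no separate proof; it rests on the premise you isolate, that the strata and the escalation order are fixed by the plan before any audit sample is drawn, so the chain is deterministic given the plan and $L$ only indexes into it. One small precision: $\bigcup_{j>\ell}P_{\pi(j)}$ is nested for any strata, and disjointness is what you actually need for the identity $P^{(\ell)}\setminus P_{\pi(\ell+1)}=P^{(\ell+1)}$ in your inductive step and for strictness.
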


\begin{proof}[Proof of Theorem~\ref{thm:c2} and Proposition~\ref{prop:prec}]
For each assumed-positive stratum, define
\(
    \mathcal E_i
    =
    \{L_i\le M(A_i)\}.
\)
For each escalation level, define
\(
    \mathcal G_\ell
    =
    \left\{
        M\!\left(
            P^{(\ell)}
            \setminus
            \Sigma_P^{(\ell)}
        \right)
        \le
        U^{(\ell)}-k^{(\ell)}
    \right\}.
\)
Exact hypergeometric inversion gives
$\Pr[\mathcal E_i]\ge1-\delta_i$.
By Lemmas~\ref{lem:cond} and~\ref{lem:chain}, the restriction
$\Sigma_P^{(\ell)}$ is a uniform sample from each prespecified active
pool, conditional on its size. Hence
$\Pr[\mathcal G_\ell]\ge1-\eta_\ell$.

The pruned-pool sample size may depend on the preceding assumed-positive
audit. However, the two families of samples use independent randomness
and disjoint tuple sets. Conditioning on the assumed-positive audit
therefore fixes the chosen sample size without changing the uniformity
of the subsequent pruned-pool sample.

Because
\(
    \sum_{i=1}^{S}\delta_i
    +
    \sum_{\ell=0}^{J}\eta_\ell
    \le\alpha,
\)
a union bound gives that
\(
    \Pr\!\left[
        \bigcap_{i=1}^{S}\mathcal E_i
        \cap
        \bigcap_{\ell=0}^{J}\mathcal G_\ell
    \right]
    \ge1-\alpha.
\)
This simultaneous event covers every stratum and every possible
termination level, so data-dependent demotion and escalation stopping
do not require additional assumptions.

On this event, every retained assumed-positive stratum satisfies
\(
    M(A_i\setminus\Sigma_i)
    =
    M(A_i)-k_i
    \ge
    \max\{0,L_i-k_i\}.
\)
Therefore,
\(
    M(V)\ge\mathit{extra}^{+}.
\)
Lemma~\ref{lem:anatomy} then gives
\(
    |R_p\cap\ext(p)|
    =
    F+M(V)
    \ge
    F+\mathit{extra}^{+},
\)
while $\mathcal G_L$ gives
\(
    |\ext(p)\setminus R_p|
    \le
    \mathit{miss}^{(L)}.
\)
Substituting these inequalities into the definitions of recall and
precision yields Equations~\ref{eq:bound} and~\ref{eq:precbound}.
Both therefore hold on the same event of probability at least
$1-\alpha$.
\end{proof}
\vspace{-9pt}
The argument does not use the predicted relationship labels or their
confidence scores. Incorrect predictions change which tuples enter each
stratum and may increase audit or escalation costs, but they do not
affect the validity of uniform finite-population sampling.

%% file: figures/fig_framework.tex
\begin{figure*}[t]
\centering
\begin{tikzpicture}[
  x=0.93cm, y=1cm, font=\small,
  ic/.style={draw=black!62, line width=0.55pt},
  lbl/.style={font=\fontsize{5}{5}\selectfont, align=center, text=black!70, anchor=north},
  stg/.style={font=\fontsize{5}{5}\selectfont\bfseries, text=black!88, anchor=south},
  ar/.style={-{Latex[length=4pt,width=3.2pt]}, draw=black!50, line width=0.55pt},
]
\definecolor{cand}{HTML}{2A78D6}
\definecolor{apos}{HTML}{EDA100}
\definecolor{prun}{HTML}{1BAF7A}
\definecolor{audc}{HTML}{EB6834}

\draw[ic, fill=black!5] (-0.40,0.34) -- (-0.40,-0.30) arc (180:360:0.40 and 0.12)
      -- (0.40,0.34) arc (0:180:0.40 and 0.12);
\draw[ic, fill=black!7] (0,0.34) ellipse (0.40 and 0.12);
\foreach \y in {-0.06,-0.22} \foreach \x in {-0.26,-0.13,0,0.13}
  {\fill[black!42] (\x,\y) rectangle ++(0.08,0.06);}
\node[stg] at (0,0.62) {1\ Predicate store};
\node[lbl, text width=2.5cm] at (0,-0.60) {cached NL views:\\text, bitmap, lineage};

\begin{scope}[shift={(3.05,0)}]
  \draw[ic, fill=black!4, rounded corners=1pt] (-0.70,0.06) rectangle (-0.14,0.44);
  \foreach \x in {-0.58,-0.44,-0.30} {\draw[ic] (\x,0.44) -- (\x,0.55);
                                      \draw[ic] (\x,0.06) -- (\x,-0.05);}
  \node[font=\fontsize{5}{5}\selectfont, text=black!72] at (-0.42,0.25) {NLI};
  \draw[ic, fill=black!4, rounded corners=2.4pt] (0.06,-0.34) rectangle (0.72,0.10);
  \node[font=\fontsize{5}{5}\selectfont, text=black!72] at (0.39,-0.12) {LLM};
  \draw[ar, dashed, dash pattern=on 1.3pt off 1.1pt] (-0.14,0.16) -- (0.04,-0.02);
  \node[stg] at (0,0.62) {2\ Two-tier reasoner};
  \node[lbl, text width=2.9cm] at (0,-0.60)
       {relation $+$ calibrated confidence;\\$O(\#\mathrm{views})$, not $O(N)$};
\end{scope}

\begin{scope}[shift={(6.20,0)}]
  \draw[ic, fill=black!4, rounded corners=1.6pt] (-0.62,-0.26) rectangle (0.62,0.42);
  \node[font=\fontsize{5}{5}\selectfont, text=black!72, align=center] at (0,0.08) {R1--R4\\bitmap algebra};
  \node[stg] at (0,0.62) {3\ Rewrite plan};
  \node[lbl, text width=2.6cm] at (0,-0.60)
       {Alg.~\ref{alg:plan}: one partition, fixed before any sampling};
\end{scope}

\draw[ar] (0.46,0.08) -- (2.28,0.08);
\draw[ar] (3.83,0.08) -- (5.52,0.08);
\draw[ar] (6.88,0.08) -- (7.72,0.08);

\begin{scope}[shift={(7.85,0)}]
  \fill[cand!80] (0,-0.24) rectangle (2.40,0.42);
  \fill[apos!80] (2.40,-0.24) rectangle (3.95,0.42);
  \fill[prun!80] (3.95,-0.24) rectangle (7.30,0.42);
  \foreach \x in {2.50,2.68,3.64,3.82}
      \fill[white] (\x,-0.16) rectangle ++(0.045,0.50);
  \foreach \x in {4.06,4.36,4.66,6.50,6.80,7.10}
      \fill[white] (\x,-0.16) rectangle ++(0.045,0.50);
  \draw[black!55, line width=0.5pt] (0,-0.24) rectangle (7.30,0.42);
  \draw[black!55, line width=0.4pt] (2.40,-0.24) -- (2.40,0.42);
  \draw[black!55, line width=0.4pt] (3.95,-0.24) -- (3.95,0.42);
  \node[font=\fontsize{5}{5}\selectfont, text=black!88] at (1.20,0.09) {candidates};
  \node[font=\fontsize{5}{5}\selectfont, text=black!88] at (3.17,0.09) {$+$};
  \node[font=\fontsize{5}{5}\selectfont, text=black!88] at (5.62,0.09) {pruned strata};
  \node[stg] at (3.65,0.62) {4\ The corpus, partitioned once};
  \node[lbl, text width=2.3cm] at (1.20,-0.38) {the oracle\\evaluates these};
  \node[lbl, text width=1.5cm] at (3.17,-0.38) {assumed $+$\\\emph{(precision)}};
  \node[lbl, text width=3.3cm] at (5.62,-0.38)
       {assumed $-$ \emph{(recall)}: one pooled sample, one exact bound};
\end{scope}

\begin{scope}[shift={(10.30,-1.68)}]
  \draw[ic, fill=white] (0,0.06) circle (0.16);
  \draw[ic, line width=0.85pt] (0.11,-0.05) -- (0.24,-0.18);
  \node[font=\fontsize{5}{5}\selectfont\bfseries, text=black!88, anchor=west] at (0.34,0)
       {5\ Audit \& escalate};
  \node[font=\fontsize{5}{5}\selectfont, text=black!70, anchor=west] at (2.62,0)
       {--- the white ticks are its own oracle calls};
\end{scope}

\begin{scope}[shift={(16.30,0)}]
  \draw[ar] (-1.06,0.08) -- (-0.44,0.08);
  \draw[ic, fill=audc!25] (0,0.20) circle (0.23);
  \node[font=\fontsize{5}{5}\selectfont, text=black!78] at (0,0.20) {$1{-}\alpha$};
  \draw[ic, fill=audc!25] (-0.14,0.02) -- (-0.08,-0.34) -- (0,-0.17)
        -- (0.08,-0.34) -- (0.14,0.02) -- cycle;
  \node[stg] at (0,0.62) {6\ Certificates};
  \node[lbl, text width=2.0cm] at (0,-0.60)
       {recall $\ge b$ \emph{and}\\precision $\ge c$, jointly};
\end{scope}
\end{tikzpicture}
\caption{\system's query path. Steps 1–3 require no tuple-level oracle evaluations. Tuple-level cost arises from the optional proxy pilot, candidate evaluation, audit, and escalation.}
\label{fig:framework}
\vspace{-15pt}
\end{figure*}

%% file: algoritms/buildplan.tex
\begin{algorithm}[t]
\small
\DontPrintSemicolon
\SetAlgoLined
\SetKwInOut{Input}{input}\SetKwInOut{Output}{output}
\Input{corpus size $N$; judgments $\{(V,\widehat{\mathit{rel}},c)\}$;
       thresholds $\tau$; cap $k_{\max}$}
\Output{a partition of $D$ into candidates and assumed strata}
$B,F,X \leftarrow$ views judged \rel{backward}/\rel{equiv}, \rel{forward}/\rel{equiv},
\rel{disjoint} with $c\ge\tau$\;
\lForEach{of the three sets}{keep its $k_{\max}$ highest-confidence views}
\ForEach{$V\in B$}{claim $R_V$ as an \textsc{AssumedPos} stratum\;}
\ForEach{$V\in F$}{claim $D_V\setminus R_V$ as \textsc{PrunedSuperset}\;}
\ForEach{$V\in X$}{claim $R_V$ as \textsc{PrunedDisjoint}\;}
\tcp{first claim wins; each stratum keeps its source view}
demote rows claimed both positive and pruned back to \textsc{Candidate}
\tcp*{conflicts are re-evaluated: always safe}
\Return candidates $\cup$ assumed-pos strata $\cup$ pruned strata\;
\caption{\textsc{BuildPlan} --- one partition, fixed before any audit sample is drawn.}
\label{alg:plan}
\end{algorithm}

%% file: experiment.tex
\section{Evaluation}
\input{tables/lotus_numbers}
\label{sec:eval}

We evaluate \system\ along five questions. \textbf{RQ1} asks whether cross-query reuse reduces oracle evaluations relative to cold execution, conventional caches, and a per-query proxy cascade. \textbf{RQ2} examines whether the result persists with a real LLM oracle and an existing semantic operator engine. \textbf{RQ3} studies scaling with corpus size and multi-label semantics. \textbf{RQ4} measures the validity, tightness, and cost of the joint recall--precision certificate. \textbf{RQ5} isolates the effects of the relation-classification tiers, rewrite rules, confidence thresholds, and workload composition. 
Label-based oracles support controlled studies at up to $804{,}414$ tuples; the real-LLM study uses a smaller corpus, since materializing a response matrix costs one LLM call per predicate--tuple pair. The former establishes scalability and the latter realism; no experiment combines both, and the real oracles are four open 8--27B models, not hosted frontier models.

\subsection{Experimental Setup}
\label{sec:eval-setup}

\subsubsection{Corpora and workloads}
We use 20~Newsgroups~\cite{lang1995newsweeder} ($7{,}951$ documents; 20 leaves in 7 groups), AG~News~\cite{zhang2015character} (up to $120{,}000$; 4 coarse classes) and RCV1-v2~\cite{lewis2004rcv1} (up to $804{,}414$; 103 topics in 4 levels, $3.2$ per tuple), which vary in taxonomy depth and label multiplicity, the two corpus properties most relevant to reuse. On a single-label corpus, predicates for distinct labels are extensionally disjoint; consequently, disjoint elimination applies to $99.2\%$ of queries on 20~Newsgroups. On RCV1-v2, a tuple may satisfy several topics, so we measure relations extensionally from the label matrix rather than infer them from topic names. The 20~Newsgroups and AG~News workloads contain leaf, group, union, and paraphrased predicates, with a default relatedness setting of $0.8$; RCV1-v2 sessions traverse its topic hierarchy. These generated workloads necessarily couple relatedness with predicate breadth and repetition. To reduce that dependence, the real-LLM experiment uses an analyst-written log of 47 session-ordered filters covering topical, rhetorical, and sentiment-oriented conditions. The log and its declared implications were fixed before any oracle calls, although they were written by the authors rather than harvested from a deployed system.

\subsubsection{Oracles}
The label oracle returns the corpus annotations at unit cost, enabling large-scale and controlled stress experiments. The real oracle is Llama-3.1-8B-Instruct~\cite{grattafiori2024llama3}, queried with a LOTUS-style \texttt{sem\_filter} prompt at temperature~0; Qwen2.5-14B~\cite{yang2024qwen25}, Mistral-Small-3.1-24B~\cite{mistral2025small31} and Gemma-3-27B~\cite{gemmateam2025gemma3} answer the same log with the same prompt (4-bit quantized; Gemma over the first 500 tuples). We materialize each distinct predicate--tuple response once and replay it under identical call accounting. A cache-free repetition of 200 responses reproduced all 200 for each model, supporting the deterministic-oracle assumption used by the audit. We separately evaluate integration with the released \texttt{lotus-ai} package. All simulated runs use seed~0; controlled stress configurations are repeated five times.

\subsubsection{Baselines and metrics}
We compare with \emph{cold} execution, an \emph{exact} normalized-text cache, an embedding cache that reuses a complete result at cosine similarity $\theta\in\{0.8,0.9\}$, and a per-query \emph{proxy} cascade. We also report \system\ without auditing and \system\ with ground-truth relations to isolate the cost of relation-classification errors. Unless stated otherwise, $t=0.9$, $\alpha=0.05$, and $\tau=(0.8,0.8,0.9,0.9)$. We measure oracle calls, macro precision and recall, relation-tier overhead, certified bounds, and bound violations. Token counts and latency are reported only for experiments that invoke a real LLM. Because each real-oracle call carries approximately the same prompt, call reductions also approximate token and monetary reductions in that setting.

\subsubsection{Implementation, hyperparameters, and environment}
\label{sec:eval-env}
All experiments ran on one Apple Mac Studio with an M3 Ultra (32 CPU cores, 24 performance and 8 efficiency; an 80-core GPU) and 512\,GB of unified memory, under macOS~26.5. \system\ is implemented in Python~3.12 with NumPy~2.5, SciPy~1.18 for the exact hypergeometric bounds, scikit-learn~1.9 for the calibration head and the proxy classifier, PyTorch~2.13, Transformers~5.15 and sentence-transformers~6.0; the NLI and embedding models run on PyTorch's Metal (MPS) device. Every LLM is served locally by Ollama with 4-bit (Q4\_K\_M) weights. Real-oracle answers are materialized once per (predicate, tuple) pair and replayed, so every method and setting sees identical answers. The LOTUS experiment instead calls the model live, through \texttt{lotus-ai}~1.2.4 and LiteLLM~1.98 in a separate environment, against Ollama~0.35.0 running four server instances, each with eight parallel slots and a $4{,}096$-token context, behind a round-robin proxy; each run checkpoints after every query pair and runs under a memory guard that stops it, and later resumes it, if the machine nears its memory limit. All randomness is seeded with $0$, and controlled stress configurations are repeated five times. Table~\ref{tab:hyperparams} lists the hyperparameters; code, workloads, response matrices, and per-query logs are in the artifact.

\begin{table}[t]
\caption{Hyperparameters. Defaults apply unless a study varies them.}
\label{tab:hyperparams}
\centering
\footnotesize
\setlength{\tabcolsep}{3pt}
\begin{tabular}{@{}p{1.45cm}p{6.6cm}@{}}
\toprule
\textbf{Component} & \textbf{Setting} \\
\midrule
Certificate & target recall $t=0.9$; failure probability $\alpha=0.05$, split by a union bound over pools and escalation levels; adaptive pruned-pool sample with allowance safety factor $\gamma=0.8$, from 20 to $4{,}000$ tuples per pool; pools under 10 tuples read in full; escalation largest stratum first \\
Precision & demotion floor $\rho=0.9$: an assumed-positive stratum is audited with $m_i=\lceil\ln\delta_i/\ln\rho\rceil$ samples and evaluated in full if it fails \\
Rewriting & $\tau_{\mathrm{eq}}=\tau_{\mathrm{fwd}}=0.8$, $\tau_{\mathrm{bwd}}=\tau_{\mathrm{dis}}=0.9$; $0.4$--$0.6$ in the threshold study and $0.5$ under LOTUS; at most 8 views per rule \\
Tier one & cross-encoder \texttt{nli-deberta-v3-xsmall}, scored in both directions; multinomial logistic head on the 6-dimensional scores, calibrated on 300 synthetic predicate pairs \\
Tier two & \texttt{qwen2.5:7b} at temperature 0, called when tier-one confidence is below $0.7$ or the prediction is overlap, at most 4 pairs per query; confidence is held-out class precision \\
LLM oracle & prompt \texttt{semfilter-v1} (answer yes or no), temperature 0, top-$k$ 1, seed 0, at most 3 output tokens; documents truncated to $1{,}200$ characters; \texttt{llama3.1:8b-instruct}, \texttt{qwen2.5:14b-instruct}, \texttt{mistral-small3.1:24b}, \texttt{gemma3:27b} \\
Baselines & embedding cache: \texttt{all-MiniLM-L6-v2}, cosine $\theta\in\{0.8,0.9\}$; proxy cascade: per-query logistic classifier on the same embeddings, 4 score bands, pilot $\{200,400,800\}$ $\times$ validation gate $\{0.6,0.7,0.8,0.9\}$ \\
Workloads & 120 queries, relatedness $0.8$ (swept over $[0,0.95]$); analyst log: 47 filters (46 distinct), 24 declared implications \\
Stress tests & entailment error $\{0,0.1,0.3\}$ $\times$ audit budget \{adaptive, 2\%, 5\%, 10\%, 20\%\} $\times$ $t\in\{0.8,0.9,0.95\}$; coverage: 540 cells $\times$ $20{,}000$ replicates; plan replay: $1{,}000$ re-audits per $(t,\alpha)$ \\
LOTUS & $N=1{,}000$, the whole log, two runs; \texttt{max\_batch\_size} 64; arms interleaved per query in alternating order \\
\bottomrule
\end{tabular}
\end{table}

\subsection{Main Results}

\subsubsection{End-to-End Effectiveness}
\label{sec:eval-e2e}

\begin{table}[t]
\footnotesize
\setlength{\tabcolsep}{2pt}
\caption{End-to-end comparison on 120-query workloads, with relatedness
$0.8$, $t=0.9$, and $\alpha=0.05$.}
\label{tab:main}
\begin{tabular}{l rrc rrc}
\toprule
 & \multicolumn{3}{c}{20NG ($N$=7{,}951)} & \multicolumn{3}{c}{AG News ($N$=8{,}000)} \\
\cmidrule(lr){2-4}\cmidrule(lr){5-7}
Method & calls & red. & P\,/\,R & calls & red. & P\,/\,R \\
\midrule
Cold (no reuse) & 954k & 1.00$\times$ & 1.000\,/\,1.000 & 960k & 1.00$\times$ & 1.000\,/\,1.000 \\
Exact cache & 660k & 1.45$\times$ & 1.000\,/\,1.000 & 216k & 4.44$\times$ & 1.000\,/\,1.000 \\
Embed cache $\theta$=0.9 & 644k & 1.48$\times$ & 0.996\,/\,0.996 & 216k & 4.44$\times$ & 1.000\,/\,1.000 \\
Embed cache $\theta$=0.8 & 429k & 2.22$\times$ & 0.900\,/\,0.915 & 136k & 7.06$\times$ & 0.925\,/\,0.942 \\
SemReuse w/o audit & 304k & 3.14$\times$ & 0.711\,/\,0.983 & 62k & 15.48$\times$ & 1.000\,/\,1.000 \\
SemReuse & 422k & 2.26$\times$ & 1.000\,/\,1.000 & 65k & 14.66$\times$ & 1.000\,/\,1.000 \\
SemReuse (oracle ent.) & 214k & 4.46$\times$ & 1.000\,/\,1.000 & 32k & 29.95$\times$ & 1.000\,/\,1.000 \\
\bottomrule
\end{tabular}
\end{table}

Table~\ref{tab:main} reports the principal comparison. On 20~Newsgroups, \system\ reduces oracle calls by $2.26\times$ while attaining measured $P=R=1.0$ and producing 100 recall and 100 precision certificates, none of which is violated. The embedding cache at $\theta=0.8$ achieves a similar $2.22\times$ reduction but lowers precision and recall to $0.900$ and $0.915$, respectively. Raising the threshold to $0.9$ restores accuracy but provides little beyond exact caching. This result does not contradict the use of embedding caches for near-duplicate requests: the mismatch arises because tuple-level predicate reuse is directional and may cover only part of the relation, whereas an embedding cache reuses an answer wholesale.

Relation-based rewriting lifts the exact cache's $1.45\times$ and $4.44\times$ to $2.26\times$ and $14.66\times$ ($4.46\times$ and $29.95\times$ with ground-truth relations): $1.57\times$ and $3.30\times$ beyond exact caching, so the gain is not attributable only to repeated predicate text; the larger gain on AG~News reflects its coarse label space and frequent recurrence.

The audit is necessary for reliable reuse. On 20~Newsgroups, removing it improves reduction from $2.26\times$ to $3.14\times$ but lowers precision to $0.711$. On AG~News, the unaudited output happens to remain exact because the relations are considerably easier; this cannot be established without oracle labels, which motivates certification.

\begin{figure}[t]
\centering
\input{figures/fig_decomposition}
\caption{Where the oracle calls go, as a share of cold evaluation.}
\label{fig:decomposition}
\vspace{-7pt}
\end{figure}

Figure~\ref{fig:decomposition} shows where the calls go. On 20~Newsgroups, candidates take $323{,}591$ calls, the audit $28{,}259$ ($3\%$ of cold) and escalation $69{,}657$ ($7\%$), which disappears with ground-truth relations; the relation tiers score only $6{,}806$ predicate pairs ($0.7\%$ of cold; $702$, or $0.07\%$, on AG~News). Across all experiments, the audit stays below $9\%$ of cold and often below $0.1\%$, whereas escalation ranges from zero on AG~News to a third or more of cold with real oracles: certification is cheap, but repairing an incorrect reuse decision can be expensive.

\subsubsection{Comparison and Composition with a Proxy Cascade}
\label{sec:eval-proxy}

The proxy baseline trains a logistic classifier from a per-query pilot sample, uses score-banded strata, and receives the same audit contract as \system. We sweep 12 pilot-size and activation-gate configurations and report its best result, obtained with a 400-tuple pilot and a $0.6$ activation gate. Reuse outperforms the proxy cascade by $1.40\times$ on this workload ($2.26\times$ vs.\ $1.62\times$; cold $954$k calls, cascade $590$k, \system\ $422$k) because the proxy pays pilot and audit costs for every query, whereas reuse benefits from an increasingly populated store. All reuse variants attain precision $1.0$; recall is $1.000$ for reuse alone and $0.992$--$0.994$ when composed with the proxy, with no certificate violations.

The two techniques compose safely but not profitably at the measured operating points. Composed, they reach $2.00\times$ ($2.84\times$ with exact relations, against $4.46\times$ for reuse alone); their joint audit produces no recall or precision violations, but each proxy band extends the escalation chain and divides the confidence budget over more active pools. Coarsening four proxy bands to one improves the composed ground-truth-relation result from $2.84\times$ to $3.79\times$, but provides little benefit with predicted relations. These results indicate that composition is most likely to help when the store is cold, and the proxy is strong.

Unlike SUPG~\cite{kang2020supg} and BARGAIN~\cite{zeighami2026bargain}, which certify a surrogate for the current query, the audit certifies past answers reused through fallible judgments, which may add as well as remove tuples; because it depends only on the final strata, not on whether they arose from an entailment judgment, a proxy score, or a hand-written rule, the two compose safely.

\subsection{LLM-Oracle Evaluation and Integration}
\label{sec:eval-realllm}

\begin{table}[t]
\caption{Experiments with a real LLM oracle. The first three rows use the 47-filter analyst-written workload; the last two evaluate LOTUS integration on the same log. ``Pred.'': predicates; ``Ora.'': oracle calls (per run for LOTUS); ``Red.'': reduction; ``Viol.'': certificate violations (for LOTUS, excluding shortfalls on bounds of exactly $1.0$, which only LOTUS's own nondeterminism can cause; Section~\ref{sec:eval-lotus}).}
\label{tab:real-oracle}
\centering
\footnotesize
\setlength{\tabcolsep}{2pt}
\begin{tabular}{p{2.5cm}rrrccr}
\toprule
\textbf{Execution} & \textbf{$N$} & \textbf{Pred.} &
\textbf{Ora.} & \textbf{Red.} & \textbf{$P/R$} &
\textbf{Viol.} \\
\midrule
Cold LLM oracle & $2{,}000$ & 47 & $94{,}000$ & $1.00\times$ &
$1.000/1.000$ & -- \\
\system\ w/o audit & $2{,}000$ & 47 & $47{,}748$ & $1.97\times$ &
$0.942/0.690$ & -- \\
\system & $2{,}000$ & 47 & $88{,}538$ & $1.06\times$ &
$1.000/0.994$ & $0$ \\
\midrule
LOTUS & $1{,}000$ & 47 & $47{,}000$ & $1.00\times$ & -- & -- \\
LOTUS + \system & $1{,}000$ & 47 & $45{,}373$ & $1.04\times$ & $0.998/0.992$ & $2$ \\
\bottomrule
\end{tabular}
\end{table}

\subsubsection{LLM oracle and analyst-written workload.}
The response matrix contains $92{,}000$ calls for the 46 distinct predicates, comprising $18.4$M prompt and $267$k output tokens. The model's responses, rather than corpus labels, define the reference semantics. Among 24 human-declared implications in the log, only the verbatim repeat holds exactly on the model-induced extensions; the median implication misses $9.3\%$ of the tuples selected by its antecedent. This finding shows that linguistic implication need not translate into exact containment under a fixed LLM oracle.

Approximate structure is nevertheless abundant: at a tolerated extension error of $\varepsilon=0.02$, $68\%$ of queries have an exploitable predecessor but only $11\%$ a containment one ($92\%$ and $49\%$ at $\varepsilon=0.10$). Yet at $t=0.9$ a containment missing $10\%$ of positives consumes the entire recall allowance before sampling uncertainty, and the containment rate falls from $28\%$ at $N=500$ to $15\%$ and $11\%$ at $N=1{,}000$ and $2{,}000$: relations that look exact on small samples need not stay so.

Table~\ref{tab:real-oracle} reports the end-to-end results. Rewriting reduces the candidate count from $94{,}000$ to $49{,}743$, but auditing requires $7{,}999$ calls and escalation restores another $30{,}796$. Consequently, \system\ achieves only a $1.06\times$ reduction, although its measured recall is $0.994$ and none of its certificates is violated. Without auditing, the same plans reach $1.97\times$ but reduce recall to $0.690$. A predictor trained from earlier audits reduces escalation by $22\%$ yet improves total cost by less than $1\%$, because most incorrectly pruned tuples are already recovered through escalation. This negative result points to partial stratum evaluation, rather than better cost prediction, as the more promising repair mechanism.

Likewise, the aggressive confidence thresholds that substantially improve the label-oracle result have negligible effect here. Together, these controls show that the binding constraint is the LLM oracle's inconsistency across related predicates, not simply a conservative relation classifier.

\begin{table}[t]
\caption{The analyst-written log on four LLM oracles. \emph{exact}~/~\emph{$\le$10\%}: declared implications (of 24) that hold exactly~/~miss by at most $10\%$, the recall allowance at $t{=}0.9$; \emph{miss}: median miss; \emph{esc.}: escalation, share of cold; \emph{perf.}: reduction with exact relations; \emph{unaudited}: the same plans without the audit. None of the $238$ recall and $238$ precision certificates is violated.}
\label{tab:multillm}
\centering
\footnotesize
\setlength{\tabcolsep}{3pt}
\begin{tabular}{l rrr rrr rr}
\toprule
 & \multicolumn{3}{c}{declared implications} & \multicolumn{3}{c}{\system} & \multicolumn{2}{c}{unaudited} \\
\cmidrule(lr){2-4}\cmidrule(lr){5-7}\cmidrule(lr){8-9}
Oracle & exact & $\le$10\% & miss & red. & esc. & perf. & red. & $R$ \\
\midrule
\multicolumn{9}{l}{\emph{$N{=}2{,}000$ tuples}} \\
\quad Llama-3.1-8B & 1 & 12 & 9.3\% & 1.06$\times$ & 33\% & 1.12$\times$ & 1.97$\times$ & 0.69 \\
\quad Qwen2.5-14B & 2 & 13 & 4.9\% & 1.06$\times$ & 36\% & 1.11$\times$ & 1.99$\times$ & 0.67 \\
\quad Mistral-Small-3.1-24B & 4 & 15 & 5.5\% & 1.05$\times$ & 39\% & 1.13$\times$ & 1.98$\times$ & 0.68 \\
\multicolumn{9}{l}{\emph{$N{=}500$ tuples}} \\
\quad Llama-3.1-8B & 5 & 13 & 7.4\% & 1.03$\times$ & 40\% & 1.04$\times$ & 1.97$\times$ & 0.69 \\
\quad Qwen2.5-14B & 8 & 12 & 8.7\% & 1.02$\times$ & 47\% & 1.03$\times$ & 1.99$\times$ & 0.66 \\
\quad Mistral-Small-3.1-24B & 12 & 16 & 0.9\% & 1.02$\times$ & 48\% & 1.03$\times$ & 1.99$\times$ & 0.68 \\
\quad Gemma-3-27B & 7 & 15 & 7.1\% & 1.03$\times$ & 42\% & 1.06$\times$ & 1.95$\times$ & 0.72 \\
\bottomrule
\end{tabular}
\end{table}

The finding is not one model's quirk (Table~\ref{tab:multillm}): no oracle honours the declared implications (at $N{=}2{,}000$ Qwen honours 2 of 24 exactly and Mistral 4; even at $N{=}500$ none honours more than half), and larger models miss less in the median but not in the tail (``medicine $\Rightarrow$ science'' misses by $62$--$66\%$ on Qwen and Mistral, $34\%$ on Llama). The economics do not move: \system\ saves $1.05$--$1.06\times$ on every full-size log, escalation takes $33$--$39\%$ of cold, and the unaudited plans lose $28$--$34\%$ of recall. Any two models agree on $92$--$97\%$ of tuples yet share only $49$--$70\%$ of positives, so a certificate can only be relative to the host's oracle.

Sweeping $N$ over the materialized answers separates audit cost from relation quality. Across an eight-fold increase in $N$, reduction with the NLI relation classifier remains between $1.03\times$ and $1.06\times$. With exact relations over the same model-induced extensions, it increases monotonically from $1.04\times$ to $1.19\times$. Candidate and escalation costs grow with $N$; only the certificate sample remains approximately size-independent. Hence, a flat audit cost yields growing end-to-end savings only when escalation remains small. The cost shares make the distinction explicit: from $N=500$ to $N=4{,}000$, candidates remain at $52.9$--$53.0\%$ of cold execution, audit changes from $4.3\%$ to $6.8\%$, and escalation remains dominant despite falling from $40.3\%$ to $34.2\%$. Oracle self-consistency, rather than corpus size alone, determines whether the constant audit cost becomes an asymptotic advantage.

\subsubsection{LOTUS integration.}
\label{sec:eval-lotus}

We run the released \texttt{lotus-ai} package, with its prompts, batching and parsing unchanged, over the whole analyst-written log and $N{=}1{,}000$ tuples, twice, in two arms per query: LOTUS as a user writes it, one \texttt{sem\_filter} over all tuples, and LOTUS behind \system, whose oracle is LOTUS's own \texttt{sem\_filter} over the tuples \system\ could not eliminate. The arms answer each query back to back in alternating order, and every count is LOTUS's own. Behind \system, LOTUS makes $\LotusCallPct\%$ fewer calls and spends $\LotusTokPct\%$ fewer tokens, the replay's $1.03\times$ at this size. Latency moves the other way: over the $\LotusCleanPairs$ pairs without a server retry, the \system\ arm takes $\LotusWallPct\%$ longer (paired per-query ratio: median $\LotusRatioMed$, interquartile range $\LotusRatioLo$--$\LotusRatioHi$), though its own computation is $\LotusOwn\%$ of its time: candidates, audit sample and escalations reach LOTUS as up to three sequential, smaller batches, which use a shared server worse than one batch of $1{,}000$; the audit sample depends only on the plan and could ride with the candidates.

The certificates hold up to the oracle's own noise: behind a batched server, LOTUS at temperature 0 does not reproduce itself, its two runs agreeing on $\LotusAgree\%$ of tuples. Of $\LotusCerts$ certificates, $\LotusShort$ fall short of the LOTUS-alone answers, by at most $\LotusMaxShort$ tuples; $\LotusShortExact$ carry a bound of exactly $1.0$, exact by construction since every answer behind it came from LOTUS in \system's own arm, so the shortfall is LOTUS answering a tuple differently in the other. The remaining $\LotusShortInexact$, one vague query's recall certificate in both runs, miss by one tuple where LOTUS's own runs differ in two; we count them as violations, within $\alpha$. Certifying across reruns thus requires pinning the serving stack, not only the temperature. LOTUS also maps an unparseable response to \emph{true}, which the precision certificate makes visible.

\subsection{Scaling with Corpus Size and Multi-Label Semantics}
\label{sec:eval-scaling}


\begin{figure}[t]
\centering
\small
\begin{subfigure}[t]{0.485\linewidth}
\begin{tikzpicture}
\begin{axis}[
  width=1.2\linewidth,
  height=1.0\linewidth,
  xmode=log,
  ymode=log,
  log basis x=10,
  xlabel={Corpus size $N$},
  ylabel={Oracle calls per query},
  legend style={
    at={(1.1,1.45)},
    anchor=north,
    legend columns=2,
    font=\scriptsize,
    draw=black,
    /tikz/every even column/.append style={column sep=5pt}
  },
  legend cell align=left,
]
\addplot[
  blue,
  dashed,
  mark=o,
  sr open mark
] coordinates {
  (8000,8000)
  (30000,30000)
  (120000,120000)
};
\addlegendentry{AG News: cold}

\addplot[
  blue,
  mark=*,
  sr mark
] coordinates {
  (8000,28.8917)
  (30000,28.8917)
  (120000,28.8917)
};
\addlegendentry{AG News: audit}

\addplot[
  orange,
  dashed,
  mark=square,
  sr open mark
] coordinates {
  (25000,25000)
  (100000,100000)
  (400000,400000)
  (804414,804414)
};
\addlegendentry{RCV1: cold}

\addplot[
  orange,
  mark=square*,
  sr mark
] coordinates {
  (25000,456.8)
  (100000,499.575)
  (400000,533.683)
  (804414,534.367)
};
\addlegendentry{RCV1: audit}

\draw[
  ->,
  >=stealth,
  srMuted,
  line width=0.4pt
]
  (axis cs:6e4,2.2e3)
  --
  (axis cs:6.6e5,668.0);

\node[
  anchor=south,
  font=\scriptsize,
  text=srInkTwo,
  align=center
] at (axis cs:1e5,2.2e3)
  {$32\times$ more tuples,\\$1.17\times$ more audit};

\end{axis}
\end{tikzpicture}
\caption{Cold evaluation grows linearly with corpus size, whereas audit cost remains nearly constant.}
\label{fig:scaling-audit}
\end{subfigure}
\hfill
\begin{subfigure}[t]{0.485\linewidth}
\centering
\begin{tikzpicture}
\begin{axis}[
  width=1.2\linewidth,
  height=1.0\linewidth,
  xmode=log,
  ymode=log,
  log basis x=10,
  xlabel={Corpus size $N$},
  ylabel={Reduction versus cold ($\times$)},
  ytick={2,3,5,10,20,35},
  yticklabels={2,3,5,10,20,35},
  yminorticks=false,
  ymin=1.8,
  ymax=40,
  legend style={
    at={(0.5,-0.30)},
    anchor=north,
    legend columns=2,
    font=\scriptsize,
    draw=none,
    /tikz/every even column/.append style={column sep=5pt}
  },
  legend cell align=left,
]
\addplot[
  blue,
  solid,
  mark=*,
  sr mark
] coordinates {
  (8000,14.6639)
  (30000,15.2564)
  (120000,15.4264)
};

\addplot[
  blue,
  densely dotted,
  mark=*,
  sr mark
] coordinates {
  (8000,29.9476)
  (30000,33.0106)
  (120000,33.9578)
};

\addplot[
  orange,
  solid,
  mark=square*,
  sr mark
] coordinates {
  (25000,2.41135)
  (100000,2.44251)
  (400000,2.4058)
  (804414,2.40714)
};

\addplot[
  orange,
  densely dotted,
  mark=square*,
  sr mark
] coordinates {
  (25000,3.12561)
  (100000,3.02789)
  (400000,3.02699)
  (804414,3.02985)
};

\end{axis}
\end{tikzpicture}

\caption{End-to-end reduction grows only when relation quality keeps escalation small.}
\label{fig:scaling-reduction}
\end{subfigure}
\vspace{-5pt}
\caption{Scaling under label-based oracle.}
\label{fig:scaling}
\vspace{-7pt}
\end{figure}

Figure~\ref{fig:scaling} evaluates 120-query workloads as AG~News grows to $120{,}000$ tuples and RCV1-v2 grows to $804{,}414$. On AG~News, audit cost remains at 28.9 calls per query from $N=8{,}000$ to $N=120{,}000$, while the reduction improves from $14.66\times$ to $15.43\times$ with predicted relations and from $29.95\times$ to $33.96\times$ with exact relations. A fixed-fraction precision audit would instead grow from 62.6 to 552.4 calls per query, demonstrating the importance of size-independent sampling. No certificate is violated at any size ($0/107$ on AG~News and $0/84$ on RCV1-v2 at every $N$), and exact caching stays at $4.44\times$ and $1.85\times$.

RCV1-v2 is harder because predicates overlap. Audit cost rises only from 457 to 534 calls per query over a $32\times$ larger corpus, but reduction stays near $2.4\times$ with predicted relations and $3.0$--$3.2\times$ with exact ones, against $1.85\times$ for exact caching: disjoint elimination fires less often, hierarchical containments leave broad candidate sets (the superset topic \texttt{CCAT} covers $47\%$ of the corpus for a query on \texttt{C15}), and a wrong relation escalates a stratum that grows with $N$ ($2{,}936$ to $109{,}207$ escalation calls per query). Certification is nearly size-independent; repair can be linear. Taxonomy structure decides both how much can be pruned and what a mistake costs, which is why four coarse classes reach $15.4\times$ and 103 overlapping topics $2.4\times$.


\subsection{Certificate Validity and Robustness}
\label{sec:eval-audit}

We first corrupt relation predictions at rates $\{0,0.1,0.3\}$ and cross these levels with five audit budgets and three recall targets. The resulting 45 configurations comprise 225 workload runs at $N=4{,}000$ (60 queries and five repetitions per stress setting). No cell violates its nominal coverage. Across all runs with a deterministic oracle, three of $29{,}730$ recall certificates and none of $15{,}080$ precision certificates are violated, all within the $1-\alpha$ guarantee: two in the proxy sweep and one in the audit ablation below (the live LOTUS runs are discussed in Section~\ref{sec:eval-lotus}).

Reduction falls smoothly from $3.1\times$ to $1.7\times$ as relation-error probability rises from 0 to 0.3, while coverage remains intact: worse judgments increase cost rather than invalidate the certificate. The bounds are conservative in the tested workloads, however: the mean published recall bound is $0.957$, whereas mean realized recall is $1.0$, so most queries do not test the boundary.

\begin{figure}[t]
\color{red}
\centering
\small
\ref{fig:coverage-legend}\par\medskip
\begin{subfigure}[t]{0.485\linewidth}
\centering
\input{figures/fig_coverage_violation}
\caption{Worst violation rate per $(t,\alpha)$ with 95\% intervals (markers offset by $t$): below nominal $\alpha$, at the union bound's share of it.}
\label{fig:coverage-violation}
\end{subfigure}
\hfill
\begin{subfigure}[t]{0.485\linewidth}
\centering
\input{figures/fig_coverage_tightness}
\caption{Realized minus certified recall at $\alpha{=}0.05$, in units of $1-t$, against the misses placed ($\lambda$).}
\label{fig:coverage-tightness}
\end{subfigure}
\vspace{-10pt}
\caption{Certificate coverage and tightness on controlled populations ($10.8$M audits; color is $t$).}
\label{fig:coverage}
\vspace{-3pt}
\end{figure}

We therefore also test the certificates where they are tight (Figure~\ref{fig:coverage}). Controlled populations of $5{,}000$ candidates with $1{,}000$ positives carry one pruned stratum of $14{,}500$ tuples or four ($8{,}000$ to $500$) plus a $500$-tuple assumed-positive stratum, with $\lambda G(1{-}t)/t$ missed positives placed in the pruned strata ($\lambda{=}1$ puts the unaudited plan exactly at the target). Each of 540 cells ($t\in\{0.8,0.9,0.95,0.98\}$, $\alpha\in\{0.01,0.05,0.1\}$, twelve $\lambda\in[0,3]$) resamples population and audit $20{,}000$ times through the shipped code. No cell exceeds its nominal $\alpha$; the worst sit at the union bound's share of $\alpha$---$\alpha/2$ for one stratum, just under $\alpha/10$ for four, below $\alpha/2$ for precision---and the hypergeometric primitive is exact to four decimals, so the distance to $\alpha$ is the union bound's split. With no misses the certificate sits $0.69$--$0.82\,(1{-}t)$ below realized recall, the price of the safety factor $\gamma{=}0.8$; the gap closes near $\lambda{=}0.8$. Re-auditing every rewritten plan of the Llama, Qwen and Mistral logs and of 20~Newsgroups $1{,}000$ times per $(t,\alpha)$ violates at most $0.16\%$ (recall) and $0.31\%$ (precision) of certificates in any cell.

\begin{table}[t]
\vspace{-5pt}
\caption{Audit-design ablation: call reduction with one decision changed at a time, all else fixed. Superscripts: measured precision where it drops below $1$ (uncertified); $^{*}$one recall violation in 34 certificates; $^{\ddagger}$the audit cannot certify and escalates. ``$+$~strata'': assumed-positive strata. ``a/q'' indicates the ratio of audit/query.}
\label{tab:auditdesign}
\vspace{-5pt}
\centering
\footnotesize
\setlength{\tabcolsep}{2pt}
\begin{tabular}{l rr rr rr}
\toprule
& \multicolumn{2}{c}{20NG} & \multicolumn{2}{c}{Llama log} & \multicolumn{2}{c}{AG News: a/q} \\
\cmidrule(lr){2-3}\cmidrule(lr){4-5}\cmidrule(lr){6-7}
Audit design & NLI & perf. & NLI & perf. & $N$=8k & $N$=120k \\
\midrule
\textbf{shipped}: pooled, adaptive, joint & 2.26$\times$ & 4.46$\times$ & 1.06$\times$ & 1.12$\times$ & 29 & 29 \\
\multicolumn{7}{l}{\emph{pooling}} \\
\quad one bound per stratum & 2.25$\times$ & 3.52$\times$ & 1.06$\times$ & 1.04$\times$ & 39 & 39 \\
\multicolumn{7}{l}{\emph{sample sizes}} \\
\quad fixed, $m{=}100$ & 1.49$\times$ & 1.54$\times$ & 1.03$\times$ & 1.05$\times$ & 38$^{\ddagger}$ & 38$^{\ddagger}$ \\
\quad fixed, $m{=}400$ & 1.68$\times$ & 2.06$\times$ & 1.04$\times$ & 1.10$\times$ & 153 & 153 \\
\quad fixed, $m{=}1{,}600$ & 2.03$\times$ & 2.88$\times$ & 1.03$\times$ & 1.02$\times$ & 613 & 613 \\
\quad fixed fraction, 2\% & 1.51$\times$ & 1.53$\times$ & 1.03$\times$ & 1.03$\times$ & 26$^{\ddagger}$ & 385 \\
\quad fixed fraction, 10\% & 1.91$\times$ & 2.94$\times$ & 1.03$\times$ & 1.04$\times$ & 128 & 1{,}925 \\
\multicolumn{7}{l}{\emph{what is certified}} \\
\quad recall only, $+$ strata sampled & 2.65$\times$$^{.72}$ & 4.46$\times$ & 1.18$\times$$^{.77}$ & 1.12$\times$ & 29 & 29 \\
\quad recall only, $+$ strata trusted & 1.95$\times$$^{.71}$ & 2.16$\times$ & 1.19$\times$$^{.74}$$^{*}$ & 1.12$\times$ & 4$^{\ddagger}$ & 4$^{\ddagger}$ \\
\bottomrule
\end{tabular}
\end{table}

Table~\ref{tab:auditdesign} changes one audit decision at a time; the Qwen and Mistral logs agree (extended version). \emph{Pooling} is up to $21\%$ ahead of one bound per pruned stratum when the audit is a large share of the bill ($4.46$ vs.\ $3.52\times$ on 20~Newsgroups with exact relations), even when that design gets its cost-minimizing allowance split, and falls behind only on the LLM logs with predicted relations, by at most $1.4\%$. \emph{Adaptive sizing} beats every fixed size and fraction in all twelve settings, untuned, whereas fixed fractions grow with $N$ ($128\to1{,}925$ audit calls per query at $10\%$ from $8$k to $120$k tuples, against a flat $29$). \emph{Joint certification} costs nothing when positive reuse is right and $9$--$17\%$ more calls when it is wrong, buying the precision repair ($0.72\to1.00$ on 20~Newsgroups, $0.77$--$0.80\to1.00$ on the LLM logs); not auditing assumed positives is usually costlier still ($1.95$ vs.\ $2.26\times$) and causes the ablation's one violation, a $0.45\%$ event under $4{,}000$ fresh seeds.

Flipping $5\%$ of oracle responses deterministically per predicate--tuple pair leaves the certificate valid relative to this modified oracle, and \system\ matches cold execution's label-space precision and recall ($0.635/0.951$ vs.\ $0.635/0.950$), while omitting the audit adds error. Reduction falls to $1.47\times$, as inconsistencies between related predicates trigger escalation. Certification preserves fidelity to what cold execution would return under the configured oracle; it does not turn a noisy oracle into ground truth---a limitation of the guarantee, not of this experiment.

\vspace{-3pt}
\subsection{Component and Workload Analysis}
\subsubsection{Relation-Classification Tiers}
\label{sec:eval-entailment}

The calibrated tier-one head improves accuracy by 31 points over direct NLI thresholding, reduces ECE from $0.52$ to $0.08$, and raises directional-relation $F_1$ from at most $0.08$ to $0.79$--$0.82$; growing the calibration set from 300 to 831 pairs moves accuracy only from $0.70$ to $0.73$, so the base NLI representation, not calibration size, is the limitation.

Tier two runs only when tier-one confidence is below $0.7$, or the prediction is overlap, for at most four pairs per query; its confidence is held-out class precision ($1.00$ equivalence, $0.84$ forward and $0.77$ backward containment, $0.46$ disjointness, $0.41$ overlap). Prompt structure matters: a five-way multiple-choice prompt answers 2 of 8 diagnostic cases, decomposing the decision into its defining binary questions 6 of 8.

On a fresh, balanced 300-pair set, the arbiter improves overlap recognition ($F_1$ $0.15\to0.36$) but fails on disjointness ($0.74\to0.06$; accuracy $0.740\to0.637$), partly because label-derived relations assume a closed-world corpus while the arbiter considers any document. Yet it lifts end-to-end reduction from $2.26\times$ to $2.50\times$ at unchanged $P=R=1.0$ with no violations: disjoint predictions have measured precision only $0.46$, below the rewrite threshold, so the system declines to act where the arbiter is weakest, and the audit protects the rest. Its 935 calls ($74{,}922$ prompt tokens) eliminate about $40{,}000$ tuple-level oracle calls.

\vspace{-3pt}
\subsubsection{Rewrite Rules and Confidence Thresholds}
\label{sec:eval-ablation}
The three rewrite rules provide complementary savings when exact relations are available: on 20~Newsgroups, exact reuse with the audit reaches $1.45\times$, adding R1 $2.71\times$, R2 $2.96\times$, and all rules $4.46\times$, each at $P=R=1.0$ with no violations. Under the default predicted-relation thresholds, disjoint elimination contributes little, because few disjoint predictions exceed $\tau_{\mathrm{dis}}=0.9$; lower thresholds recover much of this headroom. On 20~Newsgroups, setting all thresholds to $0.4$ raises reduction from $2.26\times$ to $3.47\times$ and cuts escalation from $69.7$k to $47.6$k calls; on AG~News the best setting is $0.6$, and lower ones create $5.9$k escalation calls. The audit thus turns threshold errors from a correctness risk into an efficiency trade-off, supporting an aggressive $0.4$--$0.6$ range rather than the conventional $0.8$--$0.9$, with a workload-dependent optimum; under the real LLM oracle, the same policy does not help, because approximate relations there disagree with the oracle's realized extensions.

\subsubsection{Workload Composition}
\label{sec:eval-overlap}

The relatedness generator changes the form of reuse, not only its amount: at setting $0.0$, $73\%$ of queries are exact-text repeats; at $0.95$, $90\%$ have a containment predecessor and mean selectivity falls from $0.145$ to $0.089$. Across these endpoints \system's reduction moves from $8.31\times$ to $2.68\times$ and the best embedding cache's from $3.75\times$ to $1.61\times$, with $P=R=1.0$ and no violations throughout: a transition from cheap exact reuse to partial containment-based reuse rather than a single-axis measure of reuse potential. \system\ beats the embedding cache by $1.7$--$2.2\times$ at every mixture, most where directional containment rather than near-duplicate matching is required.

Taken together, reuse pays under three conditions: predicates recur or relate to earlier ones; taxonomies are coarse and low-overlap, which keeps candidate and repair sets small; and the oracle answers related predicates consistently enough that escalation does not grow with the corpus. These explain the observed range, from $1.04\times$ under LOTUS to $17.84\times$ on AG~News, and caution against a single workload-independent speedup. Two practical rules follow: set relation thresholds aggressively, since the audit turns misclassification into cost, and publish precision with recall, since positive-union rewrites can add false positives at essentially no additional audit cost.

%% file: tables/lotus_numbers.tex
\providecommand{\LotusReps}{}\renewcommand{\LotusReps}{2}
\providecommand{\LotusPairs}{}\renewcommand{\LotusPairs}{94}
\providecommand{\LotusCleanPairs}{}\renewcommand{\LotusCleanPairs}{91}
\providecommand{\LotusRetried}{}\renewcommand{\LotusRetried}{3}
\providecommand{\LotusCallRed}{}\renewcommand{\LotusCallRed}{1.036}
\providecommand{\LotusTokRed}{}\renewcommand{\LotusTokRed}{1.036}
\providecommand{\LotusWallRed}{}\renewcommand{\LotusWallRed}{0.876}
\providecommand{\LotusCallPct}{}\renewcommand{\LotusCallPct}{3.5}
\providecommand{\LotusTokPct}{}\renewcommand{\LotusTokPct}{3.5}
\providecommand{\LotusWallPct}{}\renewcommand{\LotusWallPct}{14.2}
\providecommand{\LotusRatioMed}{}\renewcommand{\LotusRatioMed}{1.12}
\providecommand{\LotusRatioLo}{}\renewcommand{\LotusRatioLo}{1.04}
\providecommand{\LotusRatioHi}{}\renewcommand{\LotusRatioHi}{1.25}
\providecommand{\LotusOwn}{}\renewcommand{\LotusOwn}{0.15}
\providecommand{\LotusAgree}{}\renewcommand{\LotusAgree}{99.97}
\providecommand{\LotusAgreeMin}{}\renewcommand{\LotusAgreeMin}{99.7}
\providecommand{\LotusIdent}{}\renewcommand{\LotusIdent}{38}
\providecommand{\LotusCompared}{}\renewcommand{\LotusCompared}{47}
\providecommand{\LotusRecallCerts}{}\renewcommand{\LotusRecallCerts}{90}
\providecommand{\LotusPrecCerts}{}\renewcommand{\LotusPrecCerts}{90}
\providecommand{\LotusCerts}{}\renewcommand{\LotusCerts}{180}
\providecommand{\LotusShort}{}\renewcommand{\LotusShort}{28}
\providecommand{\LotusShortExact}{}\renewcommand{\LotusShortExact}{26}
\providecommand{\LotusShortInexact}{}\renewcommand{\LotusShortInexact}{2}
\providecommand{\LotusMaxShort}{}\renewcommand{\LotusMaxShort}{3}

%% file: figures/fig_decomposition.tex
\begin{tikzpicture}
\begin{axis}[sr base,
  width=0.64\columnwidth, height=2.75cm,
  xbar stacked, bar width=7.5pt, clip=false,
  xmin=0, xmax=100, xtick={0,25,50,75,100},
  xmajorgrids, ymajorgrids=false,
  xlabel={oracle calls, \% of cold evaluation},
  ytick={0,1,2,3}, yticklabels={{AG News\\$N{=}120$k},{RCV1\\$N{=}804$k},{20 Newsgroups\\$N{=}7{,}951$},{Llama-3.1-8B\\$N{=}2{,}000$}}, y dir=reverse,
  yticklabel style={align=right, font=\scriptsize, text=srInk},
  axis y line*=left, y axis line style={draw=none}, ytick style={draw=none},
  enlarge y limits=0.16,
  area legend,
  legend style={at={(0.42,-0.36)}, anchor=north, legend columns=3},
]
\addplot[fill=srBlue, draw=white, line width=0.8pt] coordinates {(6.45833,0) (27.9007,1) (33.9151,2) (52.9181,3)};
\addplot[fill=srOrange, draw=white, line width=0.8pt] coordinates {(0.0240764,0) (0.0664293,1) (2.96179,2) (8.50957,3)};
\addplot[fill=srAqua, draw=white, line width=0.8pt] coordinates {(0,0) (13.576,1) (7.30065,2) (32.7617,3)};
\legend{candidates, audit (certificate), escalation (repair)}
  \node[anchor=west, font=\scriptsize, text=srInk, inner sep=1.5pt] at (axis cs:6.482,0) {15.43$\times$};
  \node[anchor=west, font=\scriptsize, text=srInk, inner sep=1.5pt] at (axis cs:41.543,1) {2.41$\times$};
  \node[anchor=west, font=\scriptsize, text=srInk, inner sep=1.5pt] at (axis cs:44.178,2) {2.26$\times$};
  \node[anchor=west, font=\scriptsize, text=srInk, inner sep=1.5pt] at (axis cs:94.189,3) {1.06$\times$};
\end{axis}
\end{tikzpicture}

%% file: figures/fig_coverage_violation.tex
\begin{tikzpicture}[baseline=(current axis.south)]
\begin{axis}[sr base, width=0.70\linewidth, height=0.62\linewidth,
  xmode=log, ymode=log, xmin=0.0068, xmax=0.14, ymin=8e-5, ymax=0.2,
  xtick={0.01,0.05,0.1}, xticklabels={0.01,0.05,0.10},
  xminorticks=false, yminorticks=false,
  ytick={1e-4,1e-3,1e-2,1e-1},
  xlabel={nominal $\alpha$},
  ylabel={worst violation rate},
  legend to name=fig:coverage-legend, legend columns=4,
  legend style={draw=black, line width=0.4pt, inner sep=2pt,
                /tikz/every even column/.append style={column sep=5pt}},
]
\addplot[srInk, line width=0.8pt, domain=0.0068:0.14, samples=2, forget plot] {x};
\addplot[srMuted, densely dashed, line width=0.7pt, domain=0.0068:0.14, samples=2, forget plot] {x/2};
\addplot[srMuted, densely dotted, line width=0.8pt, domain=0.0068:0.14, samples=2, forget plot] {x/10};
\addlegendimage{srTa, line width=1.6pt}\addlegendentry{$t=0.8$}
\addlegendimage{srTb, line width=1.6pt}\addlegendentry{$t=0.9$}
\addlegendimage{srTc, line width=1.6pt}\addlegendentry{$t=0.95$}
\addlegendimage{srTd, line width=1.6pt}\addlegendentry{$t=0.98$}
\addlegendimage{srInkTwo, solid, mark=*, mark options={solid, draw=white, line width=0.4pt}}\addlegendentry{1 stratum}
\addlegendimage{srInkTwo, densely dashed, mark=square*, mark options={solid, fill=white, draw=srInkTwo, line width=0.6pt}}\addlegendentry{4 strata}
\addlegendimage{only marks, mark=x, srInkTwo, mark options={line width=0.7pt}}\addlegendentry{precision}
\addlegendimage{empty legend}\addlegendentry{}
\addlegendimage{srInk, line width=0.8pt}\addlegendentry{nominal $\alpha$}
\addlegendimage{srMuted, densely dashed, line width=0.7pt}\addlegendentry{$\alpha/2$}
\addlegendimage{srMuted, densely dotted, line width=0.8pt}\addlegendentry{$\alpha/10$}
\addlegendimage{empty legend}\addlegendentry{}
\addplot[srTa, only marks, mark=*, sr mark, mark size=1.2pt, forget plot, error bars/.cd, y dir=both, y explicit, error bar style={line width=0.4pt}, error mark=none] coordinates {(0.0081551,0.00455) += (0,0.0010335) -= (0,0.00088509) (0.040776,0.02605) += (0,0.0023026) -= (0,0.0021634) (0.081551,0.0475) += (0,0.0030394) -= (0,0.0029069)};
\addplot[srTb, only marks, mark=*, sr mark, mark size=1.2pt, forget plot, error bars/.cd, y dir=both, y explicit, error bar style={line width=0.4pt}, error mark=none] coordinates {(0.0086444,0.00415) += (0,0.00099199) -= (0,0.00084323) (0.043222,0.02135) += (0,0.0020993) -= (0,0.0019585) (0.086444,0.04495) += (0,0.0029634) -= (0,0.0028302)};
\addplot[srTc, only marks, mark=*, sr mark, mark size=1.2pt, forget plot, error bars/.cd, y dir=both, y explicit, error bar style={line width=0.4pt}, error mark=none] coordinates {(0.0091631,0.00295) += (0,0.00085366) -= (0,0.00070359) (0.045815,0.0215) += (0,0.0021061) -= (0,0.0019654) (0.091631,0.0457) += (0,0.002986) -= (0,0.002853)};
\addplot[srTd, only marks, mark=*, sr mark, mark size=1.2pt, forget plot, error bars/.cd, y dir=both, y explicit, error bar style={line width=0.4pt}, error mark=none] coordinates {(0.0097129,0.00345) += (0,0.00091419) -= (0,0.00076473) (0.048564,0.01645) += (0,0.0018598) -= (0,0.0017173) (0.097129,0.03765) += (0,0.0027311) -= (0,0.0025956)};
\addplot[srTa, only marks, mark=square*, mark options={solid, fill=white, draw=srTa, line width=0.6pt}, mark size=1.2pt, forget plot, error bars/.cd, y dir=both, y explicit, error bar style={line width=0.4pt}, error mark=none] coordinates {(0.0091631,0.0005) += (0,0.00041932) -= (0,0.00026021) (0.045815,0.00445) += (0,0.0010233) -= (0,0.00087481) (0.091631,0.0088) += (0,0.0013933) -= (0,0.0012476)};
\addplot[srTb, only marks, mark=square*, mark options={solid, fill=white, draw=srTb, line width=0.6pt}, mark size=1.2pt, forget plot, error bars/.cd, y dir=both, y explicit, error bar style={line width=0.4pt}, error mark=none] coordinates {(0.0097129,0.00045) += (0,0.00040407) -= (0,0.00024421) (0.048564,0.00455) += (0,0.0010335) -= (0,0.00088509) (0.097129,0.00775) += (0,0.0013146) -= (0,0.0011683)};
\addplot[srTc, only marks, mark=square*, mark options={solid, fill=white, draw=srTc, line width=0.6pt}, mark size=1.2pt, forget plot, error bars/.cd, y dir=both, y explicit, error bar style={line width=0.4pt}, error mark=none] coordinates {(0.010296,0.0003) += (0,0.00035286) -= (0,0.0001899) (0.051478,0.0035) += (0,0.00092) -= (0,0.00077059) (0.10296,0.00695) += (0,0.001251) -= (0,0.0011042)};
\addplot[srTd, only marks, mark=square*, mark options={solid, fill=white, draw=srTd, line width=0.6pt}, mark size=1.2pt, forget plot, error bars/.cd, y dir=both, y explicit, error bar style={line width=0.4pt}, error mark=none] coordinates {(0.010913,0.0007) += (0,0.0004742) -= (0,0.00031725) (0.054567,0.00385) += (0,0.00095952) -= (0,0.00081048) (0.10913,0.0048) += (0,0.0010585) -= (0,0.00091031)};
\addplot[srTa, only marks, mark=x, mark options={line width=0.7pt}, mark size=1.2pt, forget plot, error bars/.cd, y dir=both, y explicit, error bar style={line width=0.4pt}, error mark=none] coordinates {(0.010296,0.0029) += (0,0.00084733) -= (0,0.00069719) (0.051478,0.01845) += (0,0.0019616) -= (0,0.0018198) (0.10296,0.04015) += (0,0.0028133) -= (0,0.0026786)};
\addplot[srTb, only marks, mark=x, mark options={line width=0.7pt}, mark size=1.2pt, forget plot, error bars/.cd, y dir=both, y explicit, error bar style={line width=0.4pt}, error mark=none] coordinates {(0.010913,0.0035) += (0,0.00092) -= (0,0.00077059) (0.054567,0.01985) += (0,0.0020294) -= (0,0.0018881) (0.10913,0.04025) += (0,0.0028166) -= (0,0.0026819)};
\addplot[srTc, only marks, mark=x, mark options={line width=0.7pt}, mark size=1.2pt, forget plot, error bars/.cd, y dir=both, y explicit, error bar style={line width=0.4pt}, error mark=none] coordinates {(0.011568,0.0029) += (0,0.00084733) -= (0,0.00069719) (0.057841,0.0188) += (0,0.0019788) -= (0,0.0018371) (0.11568,0.0438) += (0,0.0029283) -= (0,0.0027947)};
\addplot[srTd, only marks, mark=x, mark options={line width=0.7pt}, mark size=1.2pt, forget plot, error bars/.cd, y dir=both, y explicit, error bar style={line width=0.4pt}, error mark=none] coordinates {(0.012262,0.0039) += (0,0.00096502) -= (0,0.00081603) (0.061311,0.0202) += (0,0.002046) -= (0,0.0019048) (0.12262,0.04205) += (0,0.0028739) -= (0,0.0027398)};
\end{axis}
\end{tikzpicture}

%% file: figures/fig_coverage_tightness.tex
\begin{tikzpicture}[baseline=(current axis.south)]
\begin{axis}[sr base, width=0.70\linewidth, height=0.62\linewidth,
  xmin=-0.1, xmax=3.1, ymin=-0.06, ymax=0.9,
  xtick={0,1,2,3}, minor xtick={0.5,1.5,2.5},
  ytick={0,0.25,0.5,0.75},
  xlabel={misses / allowance ($\lambda$)},
  ylabel={recall gap / $(1{-}t)$}]
\addplot[srAxis, line width=0.4pt, forget plot] coordinates {(-0.1,0) (3.1,0)};
\addplot[srMuted, densely dotted, line width=0.6pt, forget plot] coordinates {(0.8,-0.06) (0.8,0.9)};
\addplot[srTa, solid, mark=*, sr mark, mark size=1.1pt, forget plot] coordinates {(0,0.822891) (0.25,0.168138) (0.5,0.02668) (0.6,0.0108149) (0.7,0.00278427) (0.8,-0.000272034) (0.9,-0.00140828) (1,-0.00173567) (1.25,-0.000970255) (1.5,-0.000378522) (2,-4.21888e-05) (3,0)};
\addplot[srTb, solid, mark=*, sr mark, mark size=1.1pt, forget plot] coordinates {(0,0.800368) (0.25,0.159586) (0.5,0.0246164) (0.6,0.0100187) (0.7,0.00284182) (0.8,-0.00036072) (0.9,-0.00161454) (1,-0.00163954) (1.25,-0.001386) (1.5,-0.000567587) (2,-5.08163e-05) (3,0)};
\addplot[srTc, solid, mark=*, sr mark, mark size=1.1pt, forget plot] coordinates {(0,0.769231) (0.25,0.158435) (0.5,0.0247976) (0.6,0.00771467) (0.7,0.00182479) (0.8,-0.000793592) (0.9,-0.00169073) (1,-0.00154321) (1.25,-0.00119606) (1.5,-0.000764597) (2,-5.65611e-05) (3,0)};
\addplot[srTd, solid, mark=*, sr mark, mark size=1.1pt, forget plot] coordinates {(0,0.690335) (0.25,0.124923) (0.5,0.0154373) (0.6,0.00425856) (0.7,0) (0.8,-0.00159674) (0.9,-0.00193751) (1,-0.00171134) (1.25,-0.00106693) (1.5,-0.00040653) (2,0) (3,0)};
\addplot[srTa, densely dashed, mark=square*, mark options={solid, fill=white, draw=srTa, line width=0.5pt}, mark size=1.1pt, forget plot] coordinates {(0,0.825793) (0.25,0.694611) (0.5,0.659508) (0.6,0.647835) (0.7,0.636298) (0.8,0.625054) (0.9,0.617339) (1,0.603813) (1.25,0.583676) (1.5,0.56301) (2,0.526733) (3,0.460329)};
\addplot[srTb, densely dashed, mark=square*, mark options={solid, fill=white, draw=srTb, line width=0.5pt}, mark size=1.1pt, forget plot] coordinates {(0,0.796985) (0.25,0.682783) (0.5,0.659623) (0.6,0.655022) (0.7,0.646093) (0.8,0.635528) (0.9,0.633661) (1,0.622787) (1.25,0.614136) (1.5,0.59755) (2,0.576725) (3,0.542114)};
\addplot[srTc, densely dashed, mark=square*, mark options={solid, fill=white, draw=srTc, line width=0.5pt}, mark size=1.1pt, forget plot] coordinates {(0,0.765357) (0.25,0.664847) (0.5,0.666713) (0.6,0.645428) (0.7,0.648581) (0.8,0.642174) (0.9,0.64155) (1,0.623614) (1.25,0.623518) (1.5,0.603066) (2,0.58818) (3,0.541271)};
\addplot[srTd, densely dashed, mark=square*, mark options={solid, fill=white, draw=srTd, line width=0.5pt}, mark size=1.1pt, forget plot] coordinates {(0,0.689529) (0.25,0.724838) (0.5,0.692667) (0.6,0.688765) (0.7,0.673591) (0.8,0.674367) (0.9,0.661407) (1,0.663212) (1.25,0.651679) (1.5,0.616737) (2,0.601837) (3,0.564158)};
\end{axis}
\end{tikzpicture}

%% file: relatedwork.tex
\section{Related Work}
\label{sec:related}

\noindent\textit{Semantic operator systems.}
Systems such as LOTUS~\cite{patel2025lotus}, Palimpz-\\est~\cite{liu2025palimpzest}, DocETL~\cite{shankar2025docetl}, UQE~\cite{dai2024uqe}, ThalamusDB~\cite{jo2024thalamusdb}, CAESURA~\cite{urban2024caesura}, and Cortex AISQL~\cite{liskowski2025aisql} provide declarative interfaces for applying LLM-based operators to unstructured data, while TAG~\cite{biswal2025tag} considers the broader interaction between language models and data systems. Their optimizations primarily reduce the cost of executing an individual query or workflow. Examples include proxy cascades in LOTUS, cost-based plan search in Abacus~\cite{russo2025abacus}, learned optimization in Larch~\cite{zhao2026larch}, and agentic rewriting in DocETL. FrugalGPT~\cite{chen2023frugalgpt} similarly reduces inference cost through model cascades. \system\ is complementary to these techniques: it reuses results from previously evaluated predicates to reduce the tuple set presented to the host optimizer, which may itself employ proxies, cascades, batching, or other intra-query optimizations.

\vspace{3pt}
\noindent\textit{Proxy-based acceleration and statistical guarantees.}
Learned filters and inexpensive surrogate models have long been used to reduce calls to expensive predicates. Probabilistic predicates~\cite{lu2018probabilistic} and NoScope~\cite{kang2017noscope} use lightweight models to identify promising inputs, while SUPG~\cite{kang2020supg} provides recall and precision guarantees for proxy-assisted selection through importance sampling. BARGAIN~\cite{zeighami2026bargain} further develops adaptive techniques with statistical guarantees, and online aggregation~\cite{hellerstein1997online} established the broader principle of reporting progressively refined confidence bounds. These methods primarily control the error introduced by a surrogate for the current predicate. In contrast, \system\ audits decisions derived from previously computed results and fallible semantic-relation predictions. It treats all exclusions as provisional, applies exact finite-population bounds across the resulting strata, and jointly certifies recall and precision. Consequently, its guarantees do not require either the relation classifier or an optional proxy to be correct.

\vspace{3pt}
\noindent\textit{Cross-query reuse and semantic caching.}
Classical database systems reuse computation through multiple-query optimization~\cite{sellis1988mqo}, query answering using materialized views~\cite{levy1995answering,halevy2001aquv,chaudhuri1995optimizing,goldstein2001optimizing,pottinger2001minicon}, and semantic caching~\cite{dar1996semantic}. These approaches use formal query semantics to establish reuse conditions such as equivalence or containment. Because natural-language filters do not expose such relationships, \system\ predicts them and statistically audits the resulting reuse decisions. LLM-oriented semantic caches~\cite{bang2023gptcache,liu2025semanticcaching,wang2025categoryaware} instead reuse complete responses for similar requests, but similarity does not establish containment between the selected tuple sets or certify recall and precision. Eager-extraction systems~\cite{sun2025quest,lin2024zendb,arora2023evaporate} materialize an anticipated schema, whereas \system\ lazily reuses results from previously encountered predicates. Its relation module builds on sentence embeddings~\cite{reimers2019sbert} and NLI models~\cite{he2021deberta,he2023debertav3,bowman2015snli,williams2018multinli}. Because NLI scores may not represent predicate-extension relationships and can be miscalibrated~\cite{guo2017calibration}, they guide reuse planning only; the audit remains responsible for correctness.

%% file: conclusion.tex
\section{Conclusion}
\label{sec:conclusion}

We present \system, a framework for reusing earlier semantic-filter results despite uncertain relationships between natural-language predicates. \system uses a two-tier relation classifier to construct reuse plans and a pooled stratified audit to jointly certify recall and precision. The guarantee is defined relative to the host oracle and does not assume that predicted relationships are correct; prediction errors affect cost rather than certificate validity. Experiments across multiple corpora, workloads, oracle settings, and LOTUS show that \system\ can substantially reduce tuple-level oracle calls. They also identify oracle inconsistency across related predicates as a key practical constraint. The current scope is unary filters with per-query certification, and savings depend on recurring predicates and useful cached results. 
%
Future work will investigate more cost-effective escalation strategies, approaches for sharing audit evidence across related queries, and extensions of the framework to semantic joins over tuple pairs.